\newcommand{\inputb}[1]{\indent\raisebox{-1pt}{\scalebox{1}[1]{\ding{226}}} #1\\}
\newcommand{\outputb}[1]{\indent\raisebox{-1pt}{\scalebox{-1}[1]{\ding{226}}} #1\\}
\definecolor{codebg}{cmyk}{0,0,0,0}
\def\@algocf@pre@ruled{\begin{tcolorbox}[colback=codebg,arc=0.3em,boxsep=0em,left=0em, right=0.5em, top=0.5em, bottom=0.5em, boxrule=0.1mm]}%
\def\@algocf@post@ruled{\end{tcolorbox}\vspace*{-1.5em}}%
\renewcommand{\fnum@algocf}{\hspace*{0.5em}\AlCapSty{\AlCapFnt\algorithmcfname} \arabic{algocf}}
\tikzstyle{vertex}=[circle, draw, fill=gray!80!white,thick,scale=1.2]
\tikzstyle{edge}=[draw=black, thick,-]
\tikzstyle{vertex}=[anchor=center, circle, fill=gray, inner sep=1.75]
\theoremstyle{definition}
\newtheorem{definition}{Definition}[section]
\newtheorem{remark}[definition]{Remark}
\theoremstyle{plain}
\newtheorem{lemma}[definition]{Lemma}
\Crefname{fact}{Fact}{Facts}
\DeclareMathOperator{\supp}{supp}
\DeclareMathOperator{\IR}{IR}
\DeclareMathOperator{\LL}{LL}
\tikzstyle{place}=[circle,draw=black!80,thick,fill=black!80, inner sep=0pt,minimum size=1.5mm]
\renewcommand\phi\varphi
\renewcommand\epsilon\varepsilon
\DeclareMathOperator{\Sym}{Sym}
\DeclareMathOperator{\fix}{expand}
\DeclareMathOperator{\transpose}{transpose}
\DeclareMathOperator{\Aut}{Aut}
\DeclareMathOperator{\Var}{Var}
\DeclareMathOperator{\Lit}{Lit}
\DeclareMathOperator{\dcup}{\dot{\cup}}
\newcommand{\Orbi}[2]{{#2}^{#1}}
\title{satsuma: Structure-based Symmetry Breaking in SAT}
\author[1]{Markus Anders}
\author[1]{Sofia Brenner}
\author[2]{Gaurav Rattan}
\affil[1]{TU Darmstadt}
\affil[2]{University of Twente}
\affil[ ]{\small\textit{\{anders,brenner\}@mathematik.tu-darmstadt.de, g.rattan@utwente.nl}}
\newsavebox{\fminibox}
\newlength{\fminilength}
\begin{document}
\maketitle

\begin{abstract}
Symmetry reduction is crucial for solving many interesting SAT instances in practice.
Numerous approaches have been proposed, which try to strike a balance between symmetry reduction and computational overhead.
Arguably the most readily applicable method is the computation of static symmetry breaking constraints:
a constraint restricting the search-space to non-symmetrical solutions is added to a given SAT instance.
A distinct advantage of static symmetry breaking is that the SAT solver itself is not modified.
A disadvantage is that the strength of symmetry reduction is usually limited.
In order to boost symmetry reduction, the state-of-the-art tool \textsc{BreakID} [Devriendt et.~al] pioneered the identification and tailored breaking of a particular substructure of symmetries, the so-called row interchangeability groups. 

In this paper, we propose a new symmetry breaking tool called \textsc{satsuma}.
The core principle of our tool is to exploit more diverse but frequently occurring symmetry structures.
This is enabled by new practical detection algorithms for row interchangeability, row-column symmetry, Johnson symmetry, and various combinations.
Based on the resulting structural description, we then produce symmetry breaking constraints.
We compare this new approach to \textsc{BreakID} on a range of instance families exhibiting symmetry. 
Our benchmarks suggest improved symmetry reduction in the presence of Johnson symmetry and comparable performance in the presence of row-column symmetry.
Moreover, our implementation runs significantly faster, even though it identifies more diverse structures.
\end{abstract}

%%%%%%%%%%%%%%%%%%%%%%%%%%%%%%%%%%%%%%%%%%%%%%%%%%%%%%%%%%%%%%%%%%%%%%%%%%%%%%%
\section{Introduction} \label{sec:intro}
%\markus{Johnson group vs. Johnson action vs. Johnson symmetries}
%\markus{row interchangeability vs. row symmetry}
%\sofia{hyphenation in row interchangeability, row symmetry etc.}
%\sofia{consistency with $\coloneqq$ vs $=$ for definitions}
%\markus{consider a visual distiction for individualized vertices}
%\markus{remove "candidate orbit"}
%\sofia{consider citation order package}
%\subsection{Motivation}
Symmetries are present in many interesting SAT instances, ranging from hard combinatorial problems to circuit design. Making use of symmetry is paramount in order to efficiently solve many of these instances.
Practical approaches for symmetry reduction must always strike a balance between the computational overhead incurred and the strength of the symmetry reduction.
Two decades of research have led to many approaches to tackle this problem \cite{DBLP:conf/kr/CrawfordGLR96, DBLP:conf/dac/AloulMS03, DBLP:conf/sat/Devriendt0BD16, DBLP:journals/constraints/Sabharwal09, DBLP:conf/ictai/DevriendtBCDM12, DBLP:conf/sat/Devriendt0B17, DBLP:conf/tacas/MetinBCK18, DBLP:journals/jsc/JunttilaKKK20, DBLP:conf/cp/KirchwegerS21}. 
At one end of the spectrum, isomorph-free generation techniques \cite{DBLP:journals/jsc/JunttilaKKK20, DBLP:conf/cp/KirchwegerS21} apply sophisticated algorithms in conjunction with the solver, such that a solver only explores asymmetric branches of the search. While these techniques are successful in solving hard combinatorial instances (e.g., \cite{DBLP:conf/sat/KirchwegerSS22}), this comes at the price of substantial overhead:
both in terms of computational cost as well as interfering with the other strategies employed by solvers.
Hence, one must be sure that the symmetry reduction is worth the additional overhead.
It therefore seems impractical to turn these techniques ``on-by-default''.

Arguably at the other end are tools producing \emph{static symmetry breaking constraints} \cite{DBLP:conf/kr/CrawfordGLR96, DBLP:conf/dac/AloulMS03, DBLP:conf/sat/Devriendt0BD16}.
These tools add additional clauses and variables to a given instance, with the aim of reducing the number of symmetric branches explored by the solver. 
While the symmetry reduction is usually not as strong as for dynamic techniques, such constraints can be computed comparatively cheaply. 
More importantly, a distinct advantage of static symmetry breaking constraints is that the SAT solver itself is not modified, and hence there is a complete separation of concerns.
State-of-the-art static symmetry breaking tools are successfully applied as an ``on-by-default'' technique \cite{breakidkissatcomp,DBLP:conf/sat/Devriendt0BD16}.
Static symmetry breaking is not only used in SAT, but also in various other areas of constraint programming \cite{DBLP:reference/fai/GentPP06, DBLP:journals/mpc/PfetschR19, DBLP:journals/corr/Devriendt016, DBLP:conf/ijcai/AudemardJS07}.

In order to improve symmetry reduction, a rather recent development in static symmetry breaking is to detect and make use of so-called \emph{row interchangeability} subgroups \cite{DBLP:conf/sat/Devriendt0BD16, DBLP:journals/mpc/PfetschR19}.
In SAT, this feature was introduced by the state-of-the-art symmetry breaking tool \textsc{BreakID} \cite{DBLP:conf/sat/Devriendt0BD16}, but it is also used in symmetry breaking in mixed integer programming (MIP) \cite{DBLP:journals/mpc/PfetschR19}.
Row interchangeability groups stem from a natural modeling of the variables as a matrix in which all \emph{rows} are interchangeable by a symmetry.
The idea is to first identify these row interchangeability groups, and then produce tailored symmetry breaking constraints. 
The current generation of tools identifies row interchangeability by hoping for and exploiting a particular structure in the generators of the symmetry group.
However, the method is not guaranteed to work and sometimes incurs significant overhead \cite{DBLP:conf/sat/AndersSS23}.
Despite this, the gain in symmetry reduction seems to be worth the trade-off \cite{DBLP:conf/sat/Devriendt0BD16}.

In the realm of constraint programming, symmetry breaking constraints for more structures have been considered: for example, \emph{row-column symmetry} \cite{DBLP:conf/cp/FlenerFHKMPW02} is a natural extension of row interchangeability, where both the rows and columns are interchangeable.
These symmetries are common in combinatorics, scheduling, or assignment problems \cite{matrixmodelling, DBLP:conf/cp/FlenerFHKMPW02}, such as the well-known pigeonhole principle.
While traditional complete symmetry breaking constraints are unlikely to be efficiently computable for these structures \cite{DBLP:conf/cp/FlenerFHKMPW02}, different practical constraints are well-studied in the literature for \emph{manual} breaking of symmetries \cite{DBLP:conf/cp/FlenerFHKMPW02, DBLP:conf/cp/KatsirelosNW10}.
Another area in which symmetry breaking has been studied in detail is graph generation \cite{DBLP:conf/cp/CodishGIS16, DBLP:journals/constraints/CodishMPS19, DBLP:conf/cp/KirchwegerS21}.
A typical problem in this area is to determine the existence of a graph with a specific property.
Symmetries in these problems often simply correspond to \emph{isomorphic graphs}.
Even though this is rarely mentioned explicitly, such symmetries can be described by so-called \emph{Johnson groups} \cite{DBLP:conf/stoc/Babai16, permNC}.

In \emph{automated} symmetry breaking, making use of such results requires us to \emph{identify} the appropriate structures first. 
However, generalizing the existing identification strategies of contemporary tools to more elaborate structures seems elusive. 

\subsection{Contribution}
We present a new algorithm for the generation of symmetry breaking constraints, and a prototype implementation called \textsc{satsuma}. 
Our goal is to explore whether the approach of ``identifying and exploiting specific group structures'' can be pushed further.

\textbf{Techniques.}
We place the identification of specific symmetry groups at the very heart of \textsc{satsuma}.
The approach is enabled by our main contribution, a new class of practical detection algorithms.
In particular, we provide algorithms identifying row interchangeability (Section~\ref{sec:row}), row-column symmetry (Section~\ref{sec:rowcolumn}), and Johnson symmetry (Section~\ref{sec:johnson}).
Furthermore, we detect certain \emph{combinations} of the above groups, as well as groups which are \emph{similar} to the above groups, building essentially a \emph{structural description} of the group.
Symmetry breaking constraints are then chosen based on the type of detected structure: for each detected structure, we determine a set of carefully chosen symmetries, for which conventional symmetry breaking constraints are produced.

Our detection algorithms are all based on the highly efficient \emph{individualization-refinement} framework, as is commonly used in practical graph isomorphism algorithms \cite{DBLP:journals/jsc/McKayP14}.
Our detection algorithms are all heuristics, in that identification of a particular group cannot be guaranteed.
However, the success of the heuristics provably depends only on a well-studied graph property (see Section~\ref{sec:preliminaries}).

These algorithms can be applied \emph{without} computing the symmetries of the formula first: they are purely graph-based.
We exploit this by first running our tailored detection algorithms, and then only apply general-purpose symmetry detection on parts not yet identified.
In order to handle this remainder, \textsc{satsuma} reimplements parts of \textsc{BreakID}. 
Essentially, our new approach acts as a preprocessor for existing techniques.

\textbf{Benchmarks.}
We compare \textsc{satsuma} and \textsc{BreakID} on a range of well-established SAT instance families exhibiting symmetry.
In our benchmarks, we observe that our new structure-based implementation
\begin{enumerate}
	\item leads to improved SAT solver performance on instances with Johnson symmetries,
	\item comparable SAT solver performance on instances exhibiting predominantly row interchangeability or row-column symmetry,
	\item and incurs less computational overhead on all tested benchmark families (we observe better asymptotic scaling of \textsc{satsuma} on some benchmark families).
\end{enumerate}
When \textsc{satsuma} detects a structure, our new approach seems to be a win-win: it yields lower computational overhead, and the resulting speed-up for SAT solvers is comparable or better.

\section{Preliminaries} \label{sec:preliminaries}
\subsection{Satisfiability and Symmetry}
\textbf{SAT.}
In this paper, a SAT formula $F$ in \emph{conjunctive normal form} (CNF) is denoted with
$$F = \{\{l_{1,1}, \cdots{}, l_{1,k_1}\}, \cdots{}, \{l_{m,1}, \cdots{}, l_{m,k_m}\}\}.$$
Each element $C \in F$ is called a \emph{clause}, whereas a clause itself consists of a set of \emph{literals}. 
A literal is either a variable $v$ or its negation $\neg v$.
We write $\Var(F) \coloneqq \{v_1, \dots{}, v_n\}$ for the set of \emph{variables} of $F$ and use $\Lit(F)$ for its literals.

A symmetry, or \emph{automorphism}, of $F$ is a permutation $\varphi \colon \Lit(F) \to \Lit(F)$ satisfying the following properties.
First, it maps $F$ to itself, i.e., $F^\varphi = F$, where $F^\varphi$ means applying~$\varphi$ element-wise to the literals in each clause. 
Second, for all $l \in \Lit(F)$ it holds that $\neg \varphi(l) = \varphi(\neg l)$.
We define the \emph{support} of $\varphi$ as $\supp(\varphi) = \{l \in \Lit(F) \colon l^\varphi \neq l\}$, i.e., the set of all literals moved by $\varphi$. 
The set of all symmetries of~$F$ is $\Aut(F)$.
We can efficiently \emph{test} if a permutation $\varphi$ is an automorphism of $F$: for each clause $C$, we check whether $C^\varphi \in F$ holds. 

An \emph{assignment} of $F$ is a function $\theta \colon \Var(F) \to \{0, 1\}$.
We define the evaluation of $F$ under $\theta$ in the usual way, i.e., either $F[\theta] = 1$ or $F[\theta] = 0$ holds.
A formula $F$ is satisfiable if there exists an assignment $\theta$ with $F[\theta] = 1$, and unsatisfiable otherwise.
Given an assignment~$\theta$ of~$F$ and an automorphism $\varphi \in \Aut(F)$, we define 
$\theta^\varphi(v) \coloneqq \theta(v')$ if $\varphi(v) = v' \text{ for } v' \in \Var(F)$ and $\theta^{\varphi}(v) \coloneqq \neg \theta(v')$ if $\varphi(v) = \neg v'$ for $v' \in \Var(F)$,
where naturally $\neg 0 = 1$ and $\neg 1 = 0$.
It follows readily that for $\varphi \in \Aut(F)$, we have $F[\theta] = F^\varphi[\theta^\varphi] = F[\theta^\varphi]$.

\textbf{Symmetry Breaking Constraints.}
All symmetry breaking constraints in this paper are \emph{lex-leader} constraints. 
Let $\prec$ denote a total order of $\Var(F)$.
We order an assignment~$\theta$ according to $\prec$, yielding a $\{0, 1\}$-string.
We can then order assignments $\theta, \theta'$ of $F$ lexicographically by comparing their corresponding strings, denoted by $\prec_\text{lex}$. 
Given an automorphism $\varphi$ of $F$, it suffices to evaluate $F$ on those assignments $\theta$ for which $\theta^\varphi \preceq_\text{lex} \theta$ holds, since $F[\theta^\varphi] = F[\theta]$.
In particular, we may add a \emph{lex-leader constraint} $\LL^\prec_\varphi$ to~$F$, which ensures that $\theta^\varphi \preceq_\text{lex} \theta$ holds.
It is easy to see that $F$ is satisfiable, if and only if $F \bigwedge_{\varphi \in \Aut(F)} \LL^\prec_\varphi$ is satisfiable \cite{DBLP:series/faia/Sakallah21}.
Lex-leader constraints can be efficiently encoded as a CNF formula, and different encodings have been studied in detail \cite{DBLP:conf/dac/AloulMS03,DBLP:conf/sat/Devriendt0BD16}.
The practical encoding we use is reverse-engineered from \textsc{BreakID}, and is described in \cite{DBLP:conf/sat/Devriendt0BD16}. Having detected structures of symmetries, \textsc{satsuma} attempts to determine a favorable variable order and a set of automorphisms for which lex-leader constraints are constructed (see Section~\ref{sec:implementation}).

\subsection{Graphs and Symmetry}
\textbf{Graphs.}
An undirected graph $G = (V, E)$ consists of a vertex set $V$ and an edge relation $E \subseteq {V \choose 2}$.  
We refer to the set of vertices of $G$ as $V(G)$, and to the set of edges as $E(G)$.
A \emph{vertex coloring} of $G$ is a mapping $\pi : V(G) \to [k]$ to colors in $[k]$ for some $k \in \mathbb{N}$.
We call $(G, \pi)$ a vertex-colored graph. The \emph{color class} of a color $c$ consists of all vertices of $G$ with color $c$. 
The color classes form a partition of $V(G)$, the \emph{color partition} corresponding to~$\pi$.

A bijection $\varphi : V(G) \to V(G)$ is called an \emph{automorphism} of $(G, \pi)$, whenever $(G, \pi)^\varphi = (G^\varphi, \pi^\varphi) = (G, \pi)$ holds. 
Here, $G^\varphi$ denotes the graph with vertex set $V(G)$ and edges $\{u^\varphi, v^\varphi\}$ whenever $\{u,v\}$ is an edge of $G$ (where $v^\varphi$ simply denotes the image of $v$ under $\varphi$). 
The coloring $\pi^{\varphi}$ is given by $\pi^{\varphi}(v) = \pi(v^\varphi)$ for every $v \in V(G)$. 
The set of all automorphisms of $(G,\pi)$ is denoted by $\Aut(G, \pi)$.

For a given CNF formula $F$, we define the \emph{model graph} $G(F) = (G, \pi)$ as follows.
The vertex set consists of the literals and clauses of $F$.
There are edges connecting the literals of a common variable to each other.
Clauses are connected to the literals they contain.
Formally, let $E \coloneqq \{\{v, \neg v\} \colon v \in \Var(F)\} \cup \{\{C, l\} \colon l \in C,\, C \in F \}.$
Define a coloring $\pi$ by setting $\pi(l) \coloneqq 0$ for all literals $l \in \Lit(F)$ and $\pi(C) \coloneqq 1$ for all clauses $C \in F$.
It is well-known that the automorphisms of $G(F)$ restricted to $\Lit(F)$ are precisely the automorphisms of $F$ \cite{DBLP:series/faia/Sakallah21}.

\textbf{Permutation Groups.}
We recall some notions of permutation group theory. 
A detailed account can be found in \cite{seress_2003}.
Let $\Omega$ be a nonempty finite set. Let $\Sym(\Omega)$ denote the \emph{symmetric group} on $\Omega$, i.e., the group of permutations of~$\Omega$, and set $\Sym(n) \coloneqq \Sym([n])$. 
A \emph{permutation group} is a subgroup~$\Gamma$ of $\Sym(\Omega)$, denoted by $\Gamma \leq \Sym(\Omega)$. 
We also say that~$\Gamma$ \emph{acts on}~$\Omega$. 
For $g \in \Gamma$ and $\omega \in \Omega$, we write $\omega^g$ for the image of $\omega$ under $g$ and $\Orbi{\Gamma}{\omega} = \{\omega^g \colon g \in \Gamma\}$ for the \emph{orbit} of $\omega$ under $\Gamma$. In other words, $\Orbi{\Gamma}{\omega}$ consist of all points in $\Omega$ that can be reached from $\omega$ by applying elements of $\Gamma$. The partition of $\Omega$ into the orbits of $\Gamma$ is called the \emph{orbit partition}. 
For $\omega \in \Omega$, let $\Gamma_\omega \coloneqq \{g \in \Gamma \colon \omega^g = \omega \}$ denote the \emph{stabilizer} of $\omega$ in $\Gamma$. In other words, $\Gamma_\omega$ consists of those elements in $\Gamma$ that map $\omega$ to itself. The \emph{direct product} of permutation groups $\Gamma_1$ and $\Gamma_2$ which act on domains $\Omega_1$ and $\Omega_2$, respectively, is the Cartesian product $\Gamma_1 \times \Gamma_2$, endowed with a component-wise multiplication. It naturally acts component-wise on $\Omega_1 \times \Omega_2$.
\tikzset{cross/.style={cross out, draw=black, minimum size=2*(#1-\pgflinewidth), inner sep=0pt, outer sep=0pt,thick},
%default radius will be 1pt. 
cross/.default={2pt}}

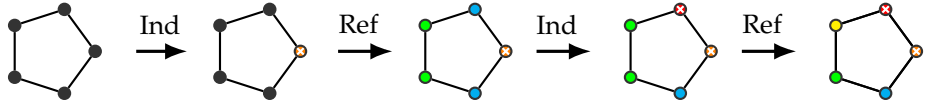
\begin{figure}
	\centering
		\begin{tikzpicture}[yscale=1,scale=0.9]
			\begin{scope}
			\foreach \a in {1,2,...,5}{
				\draw (\a*360/5: 0.65cm) node[place] (g\a) {};
			}

			\foreach \x in {1,...,5}{
				\pgfmathtruncatemacro{\xx}{\x-1}
				\pgfmathtruncatemacro{\yy}{int(Mod(\xx+1,5))}
				\pgfmathtruncatemacro{\y}{\yy+1}
				\draw [thick,draw=black] (g\x) to (g\y);
			}
			\end{scope}
			\draw [very thick, -Latex](1.25,0) -- (2,0);
			\node (i) at (1.6,0.4) {{\small Ind}};
			\begin{scope}[xshift=3cm]
				\foreach \a in {1,2,...,5}{
					\ifthenelse{\a = 5}{
						\draw (\a*360/5: 0.65cm) node[place,fill=orange] (g\a) {};
						\draw (\a*360/5: 0.65cm) node[cross,draw=white] (gc\a) {};
					}{
						\draw (\a*360/5: 0.65cm) node[place] (g\a) {};
					}
				}
	
				\foreach \x in {1,...,5}{
					\pgfmathtruncatemacro{\xx}{\x-1}
					\pgfmathtruncatemacro{\yy}{int(Mod(\xx+1,5))}
					\pgfmathtruncatemacro{\y}{\yy+1}
					\draw [thick,draw=black] (g\x) to (g\y);
				}
			\end{scope}
			\draw [very thick, -Latex](4.2,0) -- (5,0);
			\node at (4.5,0.4) {\small Ref};
			\begin{scope}[xshift=6cm]
				\foreach \a in {1,2,...,5}{
					\ifthenelse{\a = 5}{
						\draw (\a*360/5: 0.65cm) node[place,fill=orange] (g\a) {};
						\draw (\a*360/5: 0.65cm) node[cross,draw=white] (gc\a) {};
					}{
						\ifthenelse{\a = 4 \OR \a = 1}{
							\draw (\a*360/5: 0.65cm) node[place,fill=cyan] (g\a) {};
						}{
							\draw (\a*360/5: 0.65cm) node[place,fill=green] (g\a) {};
						}
					}
				}
	
				\foreach \x in {1,...,5}{
					\pgfmathtruncatemacro{\xx}{\x-1}
					\pgfmathtruncatemacro{\yy}{int(Mod(\xx+1,5))}
					\pgfmathtruncatemacro{\y}{\yy+1}
					\draw [thick,draw=black] (g\x) to (g\y);
				}
			\end{scope}
			
			\draw [very thick, -Latex](7.1,0) -- (7.9,0);
			\node at (7.4,0.4) {\small Ind};
			\begin{scope}[xshift=9cm]
			\foreach \a in {1,2,...,5}{
				\ifthenelse{\a = 5}{
					\draw (\a*360/5: 0.65cm) node[place,fill=orange] (g\a) {};
					\draw (\a*360/5: 0.65cm) node[cross,draw=white] (gc\a) {};
				}
			{
				\ifthenelse{\a = 1}{
					\draw (\a*360/5: 0.65cm) node[place,fill=red] (g\a) {};
					\draw (\a*360/5: 0.65cm) node[cross,draw=white] (gc\a) {};
				}{
				
					\ifthenelse{\a = 4 \OR \a = 1}{
						\draw (\a*360/5: 0.65cm) node[place,fill=cyan] (g\a) {};
					}{
						\draw (\a*360/5: 0.65cm) node[place,fill=green] (g\a) {};
					}
				}}
			}
			
			\foreach \x in {1,...,5}{
				\pgfmathtruncatemacro{\xx}{\x-1}
				\pgfmathtruncatemacro{\yy}{int(Mod(\xx+1,5))}
				\pgfmathtruncatemacro{\y}{\yy+1}
				\draw [thick,draw=black] (g\x) to (g\y);
			}
			\end{scope}
			\draw [very thick, -Latex](10.1,0) -- (10.9,0);
			\node at (10.4,0.4) {\small Ref};
			\begin{scope}[xshift=12cm]

			\draw (1*360/5: 0.65cm) node[place,fill=red] (g1) {};
			\draw (1*360/5: 0.65cm) node[cross,draw=white] (gc1) {};
			\draw (2*360/5: 0.65cm) node[place,fill=yellow] (g2) {};
			\draw (3*360/5: 0.65cm) node[place,fill=green] (g3) {};
			\draw (4*360/5: 0.65cm) node[place,fill=cyan] (g4) {};
			\draw (5*360/5: 0.65cm) node[place,fill=orange] (g5) {};
			\draw (5*360/5: 0.65cm) node[cross,draw=white] (gc5) {};
			
			\foreach \x in {1,...,5}{
			\pgfmathtruncatemacro{\xx}{\x-1}
			\pgfmathtruncatemacro{\yy}{int(Mod(\xx+1,5))}
			\pgfmathtruncatemacro{\y}{\yy+1}
			\draw [thick,draw=black] (g\x) to (g\y);
			}
			
			\foreach \x in {1,...,5}{
				\pgfmathtruncatemacro{\xx}{\x-1}
				\pgfmathtruncatemacro{\yy}{int(Mod(\xx+1,5))}
				\pgfmathtruncatemacro{\y}{\yy+1}
				\draw [thick,draw=black] (g\x) to (g\y);
			}
			\end{scope}
			
		\end{tikzpicture}
		\caption{An illustration of the $\IR$ process. Individualization steps break symmetries or similarities (nodes marked with a cross are individualized). Refinement steps propagate this information.}
		\label{fig:ir}
\end{figure}

%\begin{figure}
%	\centering
%	\begin{tikzpicture}[yscale=1,scale=0.8]
%	\begin{scope}
%	\foreach \a in {1,2,...,5}{
%		\draw (\a*360/5: 1cm) node[place] (g\a) {};
%	}
%	
%	\foreach \x in {1,...,5}{
%		\pgfmathtruncatemacro{\xx}{\x-1}
%		\pgfmathtruncatemacro{\yy}{int(Mod(\xx+1,5))}
%		\pgfmathtruncatemacro{\y}{\yy+1}
%		\draw [thick,draw=black] (g\x) to (g\y);
%	}
%	\end{scope}
%	\draw [very thick, -Latex](2,0) -- (3,0);
%	\begin{scope}[xshift=5cm]
%	\foreach \a in {1,2,...,5}{
%		\ifthenelse{\a = 5}{
%			\draw (\a*360/5: 1cm) node[place,fill=orange] (g\a) {};
%		}{
%			\draw (\a*360/5: 1cm) node[place] (g\a) {};
%		}
%	}
%	
%	\foreach \x in {1,...,5}{
%		\pgfmathtruncatemacro{\xx}{\x-1}
%		\pgfmathtruncatemacro{\yy}{int(Mod(\xx+1,5))}
%		\pgfmathtruncatemacro{\y}{\yy+1}
%		\draw [thick,draw=black] (g\x) to (g\y);
%	}
%	\end{scope}
%	\draw [very thick, -Latex](7,0) -- (8,0);
%	\node at (7.5,0.5) {$\IR$};
%	\begin{scope}[xshift=10cm]
%	\foreach \a in {1,2,...,5}{
%		\ifthenelse{\a = 5}{
%			\draw (\a*360/5: 1cm) node[place,fill=orange] (g\a) {};
%		}{
%			\ifthenelse{\a = 4 \OR \a = 1}{
%				\draw (\a*360/5: 1cm) node[place,fill=cyan] (g\a) {};
%			}{
%				\draw (\a*360/5: 1cm) node[place,fill=gray] (g\a) {};
%			}
%		}
%	}
%	
%	\foreach \x in {1,...,5}{
%		\pgfmathtruncatemacro{\xx}{\x-1}
%		\pgfmathtruncatemacro{\yy}{int(Mod(\xx+1,5))}
%		\pgfmathtruncatemacro{\y}{\yy+1}
%		\draw [thick,draw=black] (g\x) to (g\y);
%	}
%	\end{scope}
%	\end{tikzpicture}
%	\caption{Individualizing a vertex $v$ (orange vertex) in $\IR$, stabilizes some of the symmetry of the graph.}
%	\label{fig:ir}
%\end{figure}

\textbf{Individualization-Refinement.} 
A central ingredient in our algorithms is the so-called \emph{individualization-refinement} (IR) paradigm.
The IR paradigm is the central technique in all state-of-the-art symmetry detection algorithms \cite{DBLP:journals/jsc/McKayP14,DBLP:conf/dac/DargaLSM04,DBLP:conf/tapas/JunttilaK11,DBLP:conf/esa/AndersS21}, and highly engineered implementations are available.
The paradigm mainly consists of the \emph{individualization} technique, paired with the so-called \emph{color refinement algorithm}.
In this paragraph, we focus on a high-level explanation of the routine. 
A detailed account can be found in \cite{DBLP:journals/jsc/McKayP14}.

The central idea of IR (see Figure~\ref{fig:ir} for an illustration) is the following: given a vertex-colored graph~$(G,\pi)$ and a vertex $v \in V(G)$, the vertex $v$ is \emph{individualized}. 
Basically this means that it obtains a new color. 
The routine then proceeds with a so-called \emph{color refinement}: in each step, every vertex of $G$ obtains a new color, based on its former color together with the colors of its neighbors in $G$. 
This recoloring procedure is repeated until the corresponding color partition stabilizes. 
The final coloring $\pi'$ is then returned. 
We use the notation $\pi' = \IR((G, \pi), v)$ to denote this process.
The call $\IR((G, \pi), v)$ can be computed in time $\mathcal{O}(|E(G)|\log|V(G)|)$ (see \cite{DBLP:journals/mst/BerkholzBG17}).

The coloring $\pi'$ is a \emph{refinement} of $\pi$ in the sense that vertices with the same color in~$\pi'$ already had the same color in $\pi$. In other words, a color $c$ of $\pi$ is either preserved in $\pi'$, or partitioned into several other colors $c_1, \dots, c_n$. For $i \in [n]$, we call the sets 
$\{u \in V(G) \colon \pi(u) = c,\, \pi'(u) = c_i\}$
the \emph{fragments} of $c$ in $\pi'$. The second crucial observation is that vertices in the same orbit under the stabilizer $\Aut(G, \pi)_{v}$ obtain the same color in $\pi'$. However, it is possible that the color partition of $\pi'$ is \emph{coarser} than the orbit partition in the sense that the vertices of multiple orbits might obtain the same color in $\pi'$.

Clearly, this process can be applied inductively to individualize multiple vertices. 
%Clearly, the above process can be generalized to work with multiple individualized vertices. Given a duplicate-free list of vertices $v_1, \dots{}, v_k$ of~$G$, let $\IR(G, \pi, (v_1, \dots{}, v_k))$ denote the refined coloring obtained by individualizing $v_1, \ldots, v_k$ (each vertex obtains a different color) and applying color refinement as described above. 
It is also possible to pass the empty sequence $\epsilon$ to $\IR$, i.e., to run only the color refinement procedure. Arguing as above, the resulting color partition is guaranteed to be at least as coarse as the orbit partition of $\Aut(G,\pi)$ (i.e., the stabilizer of the empty sequence). 

The next lemma summarizes the properties of $\IR$ to which we refer throughout the paper: 

\begin{lemma} \label{lem:irreq}
%	Given a graph $G$, a vertex coloring $\pi$, and a duplicate-free list of vertices $v_1, \dots{}, v_k$ of~$G$, the refined coloring $\pi' = \IR(G, \pi, (v_1, \dots{}, v_k))$ has the following properties.
	Given a vertex-colored graph $(G,\pi)$ and a vertex $v \in V(G)$, the refined coloring $\pi' = \IR((G, \pi), v)$ has the following properties.
	\begin{enumerate}
		\item The coloring $\pi'$ is a \emph{refinement} of $\pi$: for $u,w\in V(G)$ with $\pi'(u) = \pi'(w)$, we have $\pi(u) = \pi(w)$.
		\item The color partition of $\pi'$ is at least as \emph{coarse} as the orbit partition of $\Gamma = \Aut(G, \pi)_v$: vertices $u,w\in V(G)$ with 
$\pi'(u) \neq \pi'(w)$ lie in different orbits of $\Gamma$, i.e., we have $w \not\in u^{\Gamma}$.
		\item The colors of $\pi'$ are \emph{isomorphism-invariant}: for every $\varphi \in \Sym(V(G))$, it holds that $\IR((G^\varphi, \pi^\varphi), v^\varphi) = \IR((G, \pi), v)^\varphi$. 
		In particular, if $\varphi \in \Aut(G, \pi)$, then  $\IR((G, \pi), v)^\varphi = \IR((G^\varphi, \pi^\varphi), v^\varphi) = \IR((G, \pi), v^\varphi)$ holds.
	\end{enumerate}
\end{lemma}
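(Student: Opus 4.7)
The plan is to prove all three properties by analyzing the behavior of the color-refinement procedure. All three follow from structural features of IR, namely that its color-update rule depends only on locally available invariants and that individualization assigns a canonically new color. Properties (1) and (3) will be established directly by induction on the number of refinement rounds, and (2) will then be obtained as a quick corollary of (3).

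For property (1), the initial individualization of $v$ merely assigns $v$ a fresh color not used elsewhere, so the coloring obtained immediately after individualization already refines $\pi$. During color refinement, the new color of every vertex is determined by its current color together with the multiset of current colors on its neighbors. Thus vertices that had different colors before a round necessarily have different colors afterwards, and by induction on the rounds the fixed point $\pi'$ refines $\pi$.

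For property (3), I would prove by induction on refinement rounds that, for any bijection $\varphi \in \Sym(V(G))$, the intermediate coloring produced on $(G,\pi)$ after individualizing $v$ agrees, up to applying $\varphi$, with the one produced on $(G^\varphi, \pi^\varphi)$ after individualizing $v^\varphi$. The base case holds because the fresh color chosen at individualization depends only on the old color class of $v$, which is the same as that of $v^\varphi$ in the two inputs. The inductive step uses that the color-update rule is invariant under relabeling vertices: the $\varphi$-image of the neighborhood of $u$ is the neighborhood of $u^\varphi$, so the multisets of neighbor colors match. Taking the (finite) fixed point of the two synchronously evolving colorings yields $\IR((G^\varphi, \pi^\varphi), v^\varphi) = \IR((G, \pi), v)^\varphi$. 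The special case $\varphi \in \Aut(G,\pi)$ is immediate from $(G^\varphi, \pi^\varphi) = (G, \pi)$.

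Property (2) then follows by applying (3) to any $\sigma \in \Gamma = \Aut(G,\pi)_v$: since $\sigma$ preserves $(G,\pi)$ and fixes $v$, we obtain $(\pi')^\sigma = \pi'$, i.e., $\pi'(u) = \pi'(u^\sigma)$ for every $u \in V(G)$. Hence vertices in the same $\Gamma$-orbit share a color in $\pi'$, and contrapositively $\pi'(u) \neq \pi'(w)$ forces $w \notin u^\Gamma$. The main obstacle I expect is a purely bookkeeping one inside (3): one needs to fix a canonical rule for how the fresh color at individualization and the new colors during refinement are chosen, so that the isomorphism-invariance asserted in the lemma is literal equality rather than equality up to a relabeling of colors. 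Once such a canonical labeling (for instance, hashing (old color, neighbor-color multiset)-pairs into a fixed color set) is pinned down, the inductions become routine.
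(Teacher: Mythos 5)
The paper does not give a proof of this lemma at all: it says the properties ``follow almost immediately from the definition of $\IR$'' and cites McKay--Piperno. Your write-up correctly supplies the missing argument, and the overall decomposition (properties (1) and (3) by induction on refinement rounds, with (2) as a corollary of (3) applied to $\sigma \in \Aut(G,\pi)_v$ using $v^\sigma = v$) is the standard and natural one.

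Two remarks. First, you correctly spot the only genuine subtlety, namely that the refinement rule must choose colors in an isomorphism-invariant (canonical) way in order for the equality in (3) to be literal rather than up to a bijection of the color palette; this is indeed how color refinement is defined in the cited framework, and your suggested fix (canonically encoding the pair of old color and neighbor-color multiset) is exactly what is done in practice. Second, your derivation of (2) is airtight once one unwinds the paper's notation $\pi^\varphi(u) = \pi(u^\varphi)$: from $(\pi')^\sigma = \pi'$ one gets $\pi'(u^\sigma) = \pi'(u)$ for all $u$, so orbits of $\Gamma$ are contained in color classes of $\pi'$, which is precisely the contrapositive statement asserted. No gaps.
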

These properties follow almost immediately from the definition of $\IR$, and we refer to \cite{DBLP:journals/jsc/McKayP14} for a treatment of the topic. 
We also mention that usually, as opposed to the description above, $\IR$ is defined for \emph{sequences} of vertices instead of single vertices.

We now recall the notion of Tinhofer graphs \cite{DBLP:journals/cc/ArvindKRV17}. In view of the second part of Lemma~\ref{lem:irreq}, these are precisely the graphs for which the two partitions coincide. 
\begin{definition}[Tinhofer Graph \cite{DBLP:journals/cc/ArvindKRV17, DBLP:journals/dam/Tinhofer91}] \label{def:tinhofer} A graph $G$ is called Tinhofer if for all $v \in V(G)$, the orbit partition of $\Gamma \coloneqq \Aut(G,\pi)_v$ coincides with the color partition of $\pi' \coloneqq \IR((G, \pi), v)$ and the same applies recursively to the colored graph $(G,\pi')$ (this corresponds to individualizing multiple vertices of $G$). Formally, the first property means that for all $u,w \in V(G)$, we have $w \in u^\Gamma$ if and only if $\pi'(u) = \pi'(w)$. 
\end{definition}
In particular, $\IR$ works well on Tinhofer graphs: practical graph isomorphism solvers are guaranteed to not require any backtracking.
%Empirically, graphs stemming from SAT rarely cause practical graph isomorphism solvers to backtrack. \markus{citation difficult}

\subsection{Symmetry Structures in SAT}\label{sec:detectedsymmetries}
The idea of our tool is to detect certain symmetry structures that are subsequently exploited. 
In this section, we describe the main structures detected by the tool. The description of the detection algorithms is the subject of Section~\ref{sec:detectionalgorithms}. 

Throughout, let $F$ be a SAT formula. 
As a first step, consider the \emph{disjoint direct decomposition} of the symmetries $\Aut(F)$: this is a partition $\Lit(F) = L_1 \dcup \cdots \dcup L_k$ of $\Lit(F)$ for which there exists a decomposition $\Aut(F) = A_1 \times \dots \times A_k$ into a direct product of subgroups such that, for every $i \in [k]$, the automorphisms in $A_i$ only move the literals in $L_i$. 
A disjoint direct decomposition naturally decomposes the symmetry breaking problem, and it suffices to treat each factor separately.
In the following, we always refer to the finest such decomposition, which is clearly unique. 
We call its parts $L_1, \dots, L_k$ the \emph{disjoint direct factors} of $F$. 
Note that every disjoint direct factor is a union of orbits of $\Aut(F)$.

As factors in the disjoint direct decomposition, we detect several variants of three main kinds of symmetries, namely row symmetry, row-column symmetry, and Johnson symmetry. Let us now define these notions in the special context of CNF formulas. 

\textbf{Row Symmetry.} \emph{Row interchangeability}, or \emph{row symmetry}, naturally occurs in the context of matrix modeling \cite{matrixmodelling} and is already successfully exploited in automated symmetry breaking. 
We say that a SAT formula $F$ \emph{exhibits row symmetry} if there exists a disjoint direct factor $L \subseteq \Lit(F)$ which can be arranged in a matrix $M$ such that $\Aut(F)|_L$ acts by permuting the rows of $M$.
In addition, we require that every column of $M$ is an orbit of $\Aut(F)$. 
See Figure~\ref{fig:rowsym} for an illustration.
The colored boxes illustrate orbits, whereas dashed lines indicate vertices in the same row.
The rows can be permuted using symmetry.

We should address a technical difference between the definition above and how \textsc{BreakID} handles row symmetry: in our definition, a disjoint direct factor should \emph{only} admit the action of the row symmetry group, or a particularly defined extension (see Section~\ref{sec:detectionalgorithms}).
\textsc{BreakID} on the other hand would accept any row symmetry \emph{subgroup} that it detects (see \cite{DBLP:conf/sat/AndersSS23} for further discussion).
Hence, in practice, it may happen that \textsc{BreakID} reports row symmetry, when \textsc{satsuma} does not.
However, \textsc{satsuma} may instead identify a larger, more expressive group, such as row-column symmetry, as explained below.

Let us make a general observation regarding negation symmetry.
\begin{remark}
For an orbit~$\sigma$ of literals under $\Aut(F)$, also the set $\neg \sigma \coloneqq \{\neg v \colon v \in \sigma\}$ is an orbit of literals. Hence two cases can occur: either we have $\sigma = \neg \sigma$, or the orbits $\sigma$ and $\neg \sigma$ are disjoint. 
%In any case, $\sigma$ and $\neg \sigma$ belong to the same disjoint direct factor of $F$. \sofia{is this true? What happens if there are variables fixed by every symmetry?}
%\markus{it is true, unless both $v$ and $\neg v$ are in orbits of size $1$}
\end{remark}	
%Note that in the first case, we could simply assign the value of one variable $v \in \sigma$ and restrict ourselves to automorphisms of $F$ that fix $v$. 
%For this reason, our standard scenario is the second case. 
In order to simplify the exposition, we only consider the second scenario in the following.

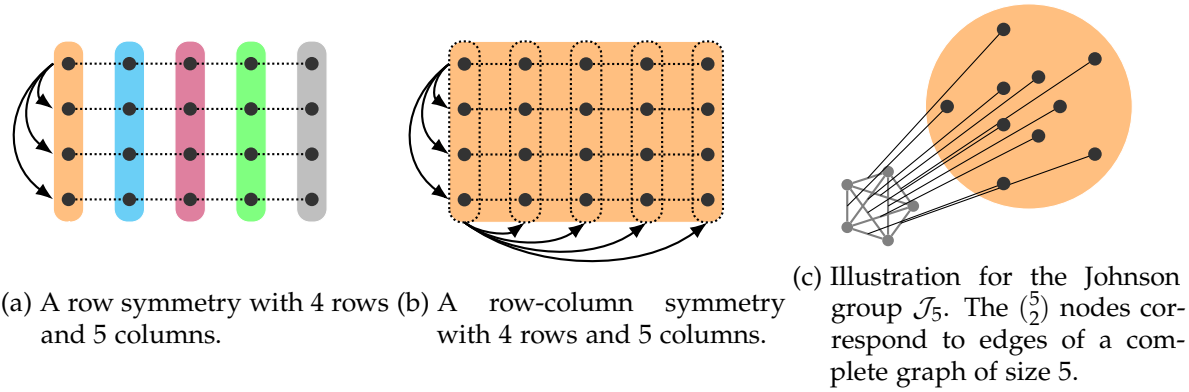
\begin{figure}
	\centering
	\begin{subfigure}[c]{0.32\textwidth}
	\begin{tikzpicture}[yscale=0.75,scale=0.8]
		\foreach \x in {1,...,5}{
			\pgfmathtruncatemacro{\colnum}{int(Mod(\x,5))}
			\ifthenelse{\colnum=0}{\def\col{gray!50}}{}
			\ifthenelse{\colnum=1}{\def\col{orange!50}}{} 
			\ifthenelse{\colnum=2}{\def\col{cyan!50}}{}
			\ifthenelse{\colnum=3}{\def\col{purple!50}}{}
			\ifthenelse{\colnum=4}{\def\col{green!50}}{}
			\draw[rounded corners=0.175cm,draw=white!0,fill=\col] (\x-0.25, 0.5) rectangle (\x+0.25, 4.5) {};
		}
		\foreach \x in {1,...,5}{
			\foreach \y in {1,...,4}{
				\node[place] (a\x\y) at (1*\x,1*\y) {};
			}
		}
		\foreach \x in {1,...,4}{
			\foreach \y in {1,...,4}{
				\pgfmathtruncatemacro{\nextx}{\x + 1}
				\draw [densely dotted, thick] (a\x\y) to (a\nextx\y);
			}
		}
		\foreach \y in {1,...,3}{
			\draw [-Latex, thick] ($(a14)-(0.25,0)$) to [bend right=45] ($(a1\y)-(0.25,0)$);
		}
		\foreach \x in {2,...,5}{
			\draw [-Latex, thick,draw=white] ($(a11)-(0,0.5)$) to [bend right=45] ($(a\x1)-(0,0.5)$);
		}
	\end{tikzpicture}
	\caption{A row symmetry with $4$ rows and $5$ columns.} \label{fig:rowsym}
	\end{subfigure}
	\begin{subfigure}[c]{0.32\textwidth}
	\begin{tikzpicture}[yscale=0.75,scale=0.8]
		\draw[rounded corners=0.175cm,draw=white!0,fill=orange!50] (1-0.25, 0.5) rectangle (5+0.25, 4.5) {};
		\foreach \x in {1,...,5}{
			\foreach \y in {1,...,4}{
				\node[place] (a\x\y) at (1*\x,1*\y) {};
			}
		}
		\foreach \x in {1,...,4}{
			\foreach \y in {1,...,4}{
				\pgfmathtruncatemacro{\nextx}{\x + 1}
				\draw [densely dotted, thick] (a\x\y) to (a\nextx\y);
			}
		}

		\foreach \x in {1,...,5}{
			\draw[rounded corners=0.175cm, densely dotted, thick, draw=black] (\x-0.25, 0.5) rectangle (\x+0.25, 4.5) {};
		}

		\foreach \y in {1,...,3}{
			\draw [-Latex, thick] ($(a14)-(0.25,0)$) to [bend right=45] ($(a1\y)-(0.25,0)$);
		}
		\foreach \x in {2,...,5}{
			\draw [-Latex, thick] ($(a11)-(0,0.5)$) to [bend right=45] ($(a\x1)-(0,0.5)$);
		}
	\end{tikzpicture}
	\caption{A row-column symmetry with $4$ rows and $5$ columns.} \label{fig:rowcolumnsym}
	\end{subfigure}
	\begin{subfigure}[c]{0.32\textwidth}
		\centering
		\begin{tikzpicture}[yscale=1,scale=0.8]

			\foreach \a in {1,2,...,5}{
				\draw (\a*360/5: 0.6cm) node[place,minimum size=1.25mm,fill=black!50,draw=black!50] (g\a) {};
			}

			\foreach \x in {1,...,5}{
				\foreach \y in {\x,...,5}{
					\draw [thick,draw=black!50] (g\x) to (g\y);
					\coordinate (ge\x\y) at ($(g\x)!0.5!(g\y)$) {};
				}
			}

			\begin{scope}[xshift=2.5cm,yshift=1.65cm]
				\draw[draw=white,fill=orange!50] (0,0) circle (1.7cm);
				\foreach \a in {1,2,...,5}{
					\coordinate (ig\a) at (\a*360/5: 1.66cm);
				}

				\foreach \x in {1,...,5}{
					\foreach \y in {\x,...,5}{
						\ifthenelse{\x=\y}{}{
						\coordinate (jd\x\y) at ($(ig\x)!0.5!(ig\y)$) {};
						}
						
					}
				}

				\foreach \x in {1,...,5}{
					\foreach \y in {\x,...,5}{
						\ifthenelse{\x=\y}{}{
						\draw [draw=black] (jd\x\y) to [] (ge\x\y);
						}
						
					}
				}

				\foreach \x in {1,...,5}{
					\foreach \y in {\x,...,5}{
						\ifthenelse{\x=\y}{}{
						\node[place] (jdn\x\y) at ($(ig\x)!0.5!(ig\y)$) {};
						}
						
					}
				}
			\end{scope}

		\end{tikzpicture}
		\caption{Illustration for the Johnson group $\mathcal{J}_5$. The $\binom{5}{2}$ nodes correspond to edges of a complete graph of size $5$.} \label{fig:johnson}
		\end{subfigure}
	\caption{Various group structures used throughout the paper. Colors indicate orbits of the group.} \label{fig:groupstructures}
\end{figure}
\begin{figure}
	\centering
	\begin{subfigure}[c]{0.45\textwidth}
		\centering
		
		\tikzstyle{jnode}=[circle,draw=black!80,thick,fill=red!30, inner sep=2pt,minimum size=4mm]			

		\tikzstyle{jedge}=[rounded rectangle,draw=black!80,thick,fill=blue!20, inner sep=0pt,minimum height=3mm, minimum width=4mm]			
		
		\begin{tikzpicture}[yscale=1,scale=0.7]

			\foreach \a in {1,2,...,5}{
				\draw (\a*360/5: 1.2cm) node[jnode,label=center:$\a$] (g\a)  {};
			}

			\foreach \x in {1,...,5}{
				\foreach \y in {\x,...,5}{
					\draw [very thick,draw=black!50] (g\x) to (g\y);
					\coordinate (ge\x\y) at ($(g\x)!0.5!(g\y)$) {};
				}
			}

			\begin{scope}[xshift=3.75cm,yshift=2.45cm]
				\foreach \a in {1,2,...,5}{
					\coordinate (ig\a) at (\a*360/5: 2cm);
				}

				\foreach \x in {1,...,5}{
					\foreach \y in {\x,...,5}{
						\ifthenelse{\x=\y}{}{
						\coordinate (jd\x\y) at ($(ig\x)!0.5!(ig\y)$) {};
						}
						
					}
				}

				\foreach \x in {1,...,5}{
					\foreach \y in {\x,...,5}{
						\ifthenelse{\x=\y}{}{
						\draw [draw=black] (jd\x\y) to [] (ge\x\y);
						}
						
					}
				}

				\foreach \x in {1,...,5}{
					\foreach \y in {\x,...,5}{
						\ifthenelse{\x=\y}{}{
						\node[jedge] (jdn\x\y) at ($(ig\x)!0.5!(ig\y)$) {\tiny $\x,\hspace{-0.05cm}\y$};
						}
						
					}
				}
			\end{scope}

			\node at (0,-1.75) {};
			
		\end{tikzpicture}
		\caption{The domain of $\mathcal{J}_5$ consists of all \\ $2$-subsets of the base set $\{1,\dots,5\}$.} \label{fig:johnson1}
		\end{subfigure}
	\hfill
	\begin{subfigure}[c]{0.5\textwidth}
	\centering
	\begin{tikzpicture}[yscale=1,scale=0.7]
	
	\tikzstyle{jnode}=[circle,draw=black!80,thick,fill=red!30, inner sep=2pt,minimum size=3mm]			
	
	\tikzstyle{jedge}=[rounded rectangle,draw=black!80,thick,fill=blue!20, inner sep=0pt,minimum height=3mm, minimum width=4mm]			
	
	\foreach \a in {1,2,...,5}{
		\draw (\a*360/5: 0.8cm) node[jnode,label=center:{\scriptsize $\a$}] (g\a)  {};
	}
	
	\foreach[evaluate={\b=int(1+\a)}] \a in {1,2,...,4}{	
		\draw[-Latex,semithick] (g\a) to (g\b) {};
	}
	\draw[-Latex, semithick] (g5) to (g1) {};

	\begin{scope}[yshift=4cm]
	
	\begin{scope}[xshift=2cm]
	
	\foreach[evaluate={\b=int(Mod(int(\a),5)+1) }] \a in {1,2,...,5}{
		\ifthenelse{\a=5}{
			\draw (\a*360/5: 0.9cm) node[jedge,label=center:{\tiny $\b,\hspace{-0.05cm}\a$}] (gl\a\b)  {};}
		{
		    \draw (\a*360/5: 0.9cm) node[jedge,label=center:{\tiny $\a,\hspace{-0.05cm}\b$}] (gl\a\b)  {};
	    }
	}

	\draw[-Latex, semithick] (gl12) to (gl23) {};
	\draw[-Latex, semithick] (gl23) to (gl34) {};
	\draw[-Latex, semithick] (gl34) to (gl45) {};
	\draw[-Latex, semithick] (gl45) to (gl51) {};
	\draw[-Latex, semithick] (gl51) to (gl12) {};

	\end{scope}
	
	\begin{scope}[xshift=5cm]

\foreach[ evaluate={\b=int(Mod(\a+2,5))} ] \a in {1,2,...,5}{
	\ifthenelse{\a<3}{
		\draw (\a*360/5: 0.9cm) node[jedge,label=center:{\tiny $\a,\hspace{-0.05cm}\b$}] (gr\a\b)  {};}
	{}
	\ifthenelse{\a=3}{
		\draw (\a*360/5: 0.9cm) node[jedge,label=center:{\tiny $\a,\hspace{-0.05cm}5$}] (gr35)  {};}
	{}
	\ifthenelse{\a>3}{
		\draw (\a*360/5: 0.9cm) node[jedge,label=center:{\tiny $\b,\hspace{-0.05cm}\a$}] (gr\a\b)  {};}
	{}
	
}

	\draw[-Latex, semithick] (gr13) to (gr24) {};
	\draw[-Latex, semithick] (gr24) to (gr35) {};
	\draw[-Latex, semithick] (gr35) to (gr41) {};
	\draw[-Latex, semithick] (gr41) to (gr52) {};
	\draw[-Latex, semithick] (gr52) to (gr13) {};

\end{scope}
\end{scope}

\draw[-Latex, thick] (1,1) -- node[midway, sloped, above] {\scriptsize induced } (2.5,2.5);
\draw[-Latex, thick] (1,1) -- node[midway, sloped, below=-0.05cm] {\scriptsize action} (2.5,2.5);

	\end{tikzpicture}
	\caption{The action of the 5-cycle on the base set and \\ its induced action on the domain of $\mathcal{J}_5$.} \label{fig:johnson2}
\end{subfigure}

\caption{An illustration of the Johnson group $\mathcal{J}_5$.   } \label{fig:johnsondetail}
\end{figure}
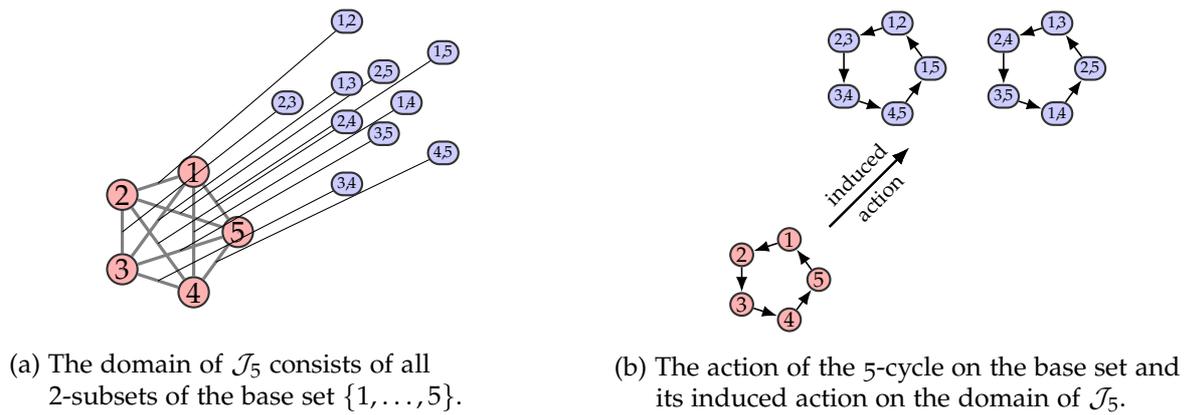

\textbf{Row-column symmetry.}  Row-column symmetries are an extension of row interchangeability. 
Row-column symmetry naturally occurs whenever both the rows and columns of a matrix of variables are interchangeable.
Examples can be found in scheduling, design, and combinatorial problems (see \cite{matrixmodelling}).

For $m, n \in \mathbb{N}$, the \emph{row-column symmetry} group is $\Gamma \coloneqq \Sym(n) \times \Sym(m)$, acting componentwise on $[n] \times [m]$. We think of $[n] \times [m]$ as an $n \times m$ matrix $M$, on which $(\sigma_1, \sigma_2) \in \Gamma$ acts by permuting the $n$ rows according to $\sigma_1$ and the $m$ columns according to~$\sigma_2$.

A SAT formula $F$ \emph{exhibits row-column symmetry} if there exists a disjoint direct factor $L \subseteq \Lit(F)$ consisting of an orbit $\sigma$ of $\Aut(F)$ and its negation $\neg \sigma$ such that the following holds: the literals in $\sigma$ can be arranged in an $n \times m$-matrix $M$ such that $\Aut(F)|_{\sigma}$ acts as a row-column symmetry group on $M$. See Figure~\ref{fig:rowcolumnsym} for an illustration. Note that the action of $\Aut(F)$ on $\sigma$ naturally extends to a row-column symmetry action on $\neg \sigma$. For this reason, our algorithm generates the matrix $M$ of the literals in $\sigma$ and extends this to $\neg \sigma$, see Section~\ref{sec:rowcolumn} for details.

\textbf{Johnson symmetry.} Johnson groups are naturally tied to the graph isomorphism problem. 
Whenever a problem asks for the existence of an undirected graph with a certain property, typically, the underlying symmetries form a Johnson group. 

%We now formally define Johnson groups. 
Observe that $\pi \in \Sym(n)$ induces a permutation on the domain ${[n] \choose 2}$ of $2$-subsets of~$[n]$, mapping $\{a_1, a_2\}$ to $\{a_1^\pi, a_2^\pi\}$.
This way, $\Sym(n)$ becomes a permutation group on a domain of size~${|n| \choose 2}$, the \emph{Johnson group} $\mathcal{J}_n$. Technically, these groups are specifically Johnson groups of arity $2$.
%Likewise, we can define an induced action of the alternating group $\Alt(k)$ on $t$-subsets of $[k]$ and denote the corresponding permutation group by $A^{(t)}_k$. The permutation groups $S_k^{(t)}$ and $A_k^{(t)}$ are called \emph{Johnson groups}, and 
The corresponding action is called a \emph{Johnson action}.  

We now define Johnson symmetries for SAT formulas. Intuitively, the variables the formula correspond to the ``edges'' (i.e., sets of two vertices) of a complete graph. There is a symmetric action on the ``vertices'' of this underlying graph and the variables of the formula (``edges'') are permuted accordingly.
See Figure~\ref{fig:johnson} and Figure~\ref{fig:johnsondetail} for an illustration. Formally, a SAT formula $F$ \emph{exhibits a Johnson symmetry} if the following holds: there exists a disjoint direct factor $L \subseteq \Lit(F)$ consisting of an orbit $\sigma$ of $\Aut(F)$ and its negation $\neg \sigma$ such that the literals in $\sigma$ can be relabeled as $x_{\{i,j\}}$ for all $\{i,j\} \in \binom{[n]}{2}$ and $\Aut(F)|_\sigma$ acts as the Johnson group $\mathcal{J}_n$ (by permuting the index sets). 
%Formally, there exists a bijection $b \colon \sigma \to \binom{[n]}{2}$ which is compatible with automorphisms in the following sense: for every $\alpha \in \Aut(F)$, there exists $\pi_\alpha \in \Sym(n)$ such that for all $v,w \in \sigma$, we have $\alpha(v) = w$ if and only if $j_{\pi_\alpha}$ maps the 2-set $b(v)$ to $b(w)$. 
%Moreover, we assume that every such Johnson action is indeed a symmetry of $F$.
%\sofia{maybe this is too complicated}
Again, the action of $\Aut(F)$ naturally extends to~$\neg \sigma$. 

\section{Detection Algorithms} \label{sec:detectionalgorithms}
We now present our detection algorithms.
All algorithms are centered around detecting structure on the model graph $G(F)$ of a given CNF formula $F$.
Recall that $G(F)$ contains a vertex for each literal, so we may use these terms interchangeably. 
The major design principles of our algorithms are described in the following.

\textbf{Colors are Orbits.}
Our algorithms work on the assumption that the model graph $G(F)$ is Tinhofer (see Definition~\ref{def:tinhofer}).
Then we can compute orbits of stabilizers using $\IR$.
In particular, the color classes of $\pi = \IR(G(F), \epsilon)$ \emph{are} then the orbits of $\Aut(G(F))$.

\textbf{Certified Correctness.}
The input model graph might not be Tinhofer.
However, each algorithm constructs a carefully chosen set of candidate permutations, which suffices to prove the existence of a certain group action.
It is then verified that these permutations are automorphisms of the formula $F$, which ensures correctness.
In our implementation, we produce lex-leader constraints only for automorphisms verified on the original formula.

\textbf{Color-by-color.} 
All of our detection algorithms proceed color-by-color, or orbit-by-orbit: given an orbit, the algorithms stabilize a specific set of points, observing the effect on the given orbit as well as other orbits.
If an orbit exhibits a specific group action, then this effect is clearly defined, and a model of the purported structure is made.

\begin{figure}
	\centering
	\begin{subfigure}[c]{0.3\textwidth}
	\begin{tikzpicture}[yscale=0.75,scale=0.7]
		\foreach \x in {1,...,5}{
			\pgfmathtruncatemacro{\colnum}{int(Mod(\x,5))}
			\ifthenelse{\colnum=0}{\def\col{gray!50}}{}
			\ifthenelse{\colnum=1}{\def\col{orange!50}}{} 
			\ifthenelse{\colnum=2}{\def\col{cyan!50}}{}
			\ifthenelse{\colnum=3}{\def\col{purple!50}}{}
			\ifthenelse{\colnum=4}{\def\col{green!50}}{}
			\draw[rounded corners=0.175cm,draw=white!0,fill=\col] (\x-0.25, 0.5) rectangle (\x+0.25, 4.5) {};
		}
		\foreach \x in {1,...,5}{
			\foreach \y in {1,...,4}{
				\node[place] (a\x\y) at (1*\x,1*\y) {};
			}
		}
		\foreach \x in {1,...,4}{
			\foreach \y in {1,...,4}{
				\pgfmathtruncatemacro{\nextx}{\x + 1}
				\draw [densely dotted, thick] (a\x\y) to (a\nextx\y);
			}
		}
		\foreach \y in {1,...,3}{
			\draw [-Latex, thick] ($(a14)-(0.25,0)$) to [bend right=45] ($(a1\y)-(0.25,0)$);
		}
		\foreach \x in {2,...,5}{
			\draw [-Latex, thick,draw=white] ($(a11)-(0,0.5)$) to [bend right=45] ($(a\x1)-(0,0.5)$);
		}
	\end{tikzpicture}
	\caption{A row symmetry with $4$ rows and $5$ columns.}
	\label{fig:row:beginning}
	\end{subfigure}
	\hspace{0.25cm}
	\begin{subfigure}[c]{0.3\textwidth}
		\begin{tikzpicture}[yscale=0.75,scale=0.7]
			\draw[rounded corners=0.175cm,draw=white!0,fill=white!0] (1-0.25, 0.5) rectangle (1+0.25, 4.5) {};
			\foreach \x in {1,...,5}{
				\pgfmathtruncatemacro{\colnum}{int(Mod(\x,5))}
				\ifthenelse{\colnum=0}{\def\col{gray}}{}
				\ifthenelse{\colnum=1}{\def\col{orange}}{} 
				\ifthenelse{\colnum=2}{\def\col{cyan}}{}
				\ifthenelse{\colnum=3}{\def\col{purple}}{}
				\ifthenelse{\colnum=4}{\def\col{green}}{}
				\ifthenelse{\colnum=0}{\def\colx{blue}}{}
				\ifthenelse{\colnum=1}{\def\colx{brown}}{} 
				\ifthenelse{\colnum=2}{\def\colx{pink}}{}
				\ifthenelse{\colnum=3}{\def\colx{yellow}}{}
				\ifthenelse{\colnum=4}{\def\colx{green}}{}
				\draw[rounded corners=0.175cm,draw=white!0,fill=\col!50] (\x-0.25, 0.5) rectangle (\x+0.25, 3.5) {};
				\draw[rounded corners=0.175cm,draw=white!0,fill=\colx!66] (\x-0.25, 3.5+0.175) rectangle (\x+0.25, 4.5-0.175) {};
			}
			\foreach \x in {1,...,5}{
				\foreach \y in {1,...,4}{
					\node[place] (a\x\y) at (1*\x,1*\y) {};
				}
			}
			
			\node[cross,draw=white] (ac14) at (1*1,1*4) {};

			\foreach \x in {1,...,4}{
				\foreach \y in {1,...,4}{
					\pgfmathtruncatemacro{\nextx}{\x + 1}
					\draw [densely dotted, thick] (a\x\y) to (a\nextx\y);
				}
			}
			\foreach \y in {1,...,2}{
				\draw [-Latex, thick] ($(a13)-(0.25,0)$) to [bend right=45] ($(a1\y)-(0.25,0)$);
			}
			\foreach \x in {2,...,5}{
				\draw [-Latex, thick,draw=white] ($(a11)-(0,0.5)$) to [bend right=45] ($(a\x1)-(0,0.5)$);
			}
		\end{tikzpicture}
		\caption{Individualizing a vertex of a row identifies the entire row.}
		\label{fig:row:ind}
		\end{subfigure}
		\hspace{0.25cm}
		\begin{subfigure}[c]{0.3\textwidth}
			\begin{tikzpicture}[yscale=0.75,scale=0.7]
				\foreach \x in {1,...,5}{
					\pgfmathtruncatemacro{\colnum}{int(Mod(\x,5))}
					\ifthenelse{\colnum=0}{\def\col{gray!50}}{}
					\ifthenelse{\colnum=1}{\def\col{orange!50}}{} 
					\ifthenelse{\colnum=2}{\def\col{cyan!50}}{}
					\ifthenelse{\colnum=3}{\def\col{purple!50}}{}
					\ifthenelse{\colnum=4}{\def\col{green!50}}{}
					\draw[rounded corners=0.175cm,draw=white!0,fill=\col] (\x-0.25, 0.5) rectangle (\x+0.25, 4.5) {};
				}
				\foreach \x in {1,...,4}{
					\foreach \y in {1,...,4}{
						\node[place] (a\x\y) at (1*\x,1*\y) {};
					}
				}
				\foreach \y in {1,...,4}{
						\node[place] (aa5\y) at (1*5,1*\y-0.25) {};
						\node[place] (ab5\y) at (1*5,1*\y+0.25) {};
					}

				\foreach \x in {1,...,3}{
					\foreach \y in {1,...,4}{
						\pgfmathtruncatemacro{\nextx}{\x + 1}
						\draw [densely dotted, thick] (a\x\y) to (a\nextx\y);
					}
				}
				\foreach \y in {1,...,4}{
						\draw [densely dotted, thick] (a4\y) to (aa5\y);
						\draw [densely dotted, thick] (a4\y) to (ab5\y);
					}

				\foreach \y in {1,...,3}{
					\draw [-Latex, thick] ($(a14)-(0.25,0)$) to [bend right=45] ($(a1\y)-(0.25,0)$);
				}
				\foreach \x in {2,...,5}{
					\draw [-Latex, thick,draw=white] ($(a11)-(0,0.5)$) to [bend right=45] ($(a\x1)-(0,0.5)$);
`				}
			\end{tikzpicture}
			\caption{A row symmetry on blocks of size $2$.}
			\label{fig:row:block}
			\end{subfigure}
			\caption{Illustrations of different aspects of row symmetry.} \label{fig:row}
\end{figure}
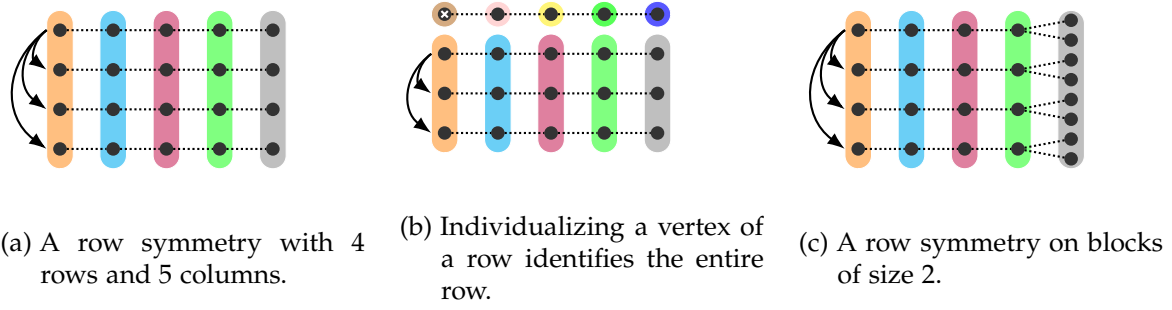

\subsection{Row Symmetry} \label{sec:row}
We describe an algorithm for row symmetry. First, we define an auxiliary function that transposes two pair-wise disjoint lists of literals of equal length: For $l \in \Lit(F)$, let
%Let $\transpose_F((l_1\dots{}l_k), (l_1'\dots{}l_k'))_F(l) : \Lit(F) \to \Lit(F)$ with 
$$\transpose_F((l_1\dots{}l_k), (l_1'\dots{}l_k'))_F(l) \coloneqq 
\begin{cases}
	l_i' &\text{ if } l = l_i \text{ with } i \in [k]\\
	l_i &\text{ if } l = l_i' \text{ with } i \in [k]\\
	l &\text{ otherwise.}
\end{cases}
$$
\SetKwFunction{Row}{DetectRowSymmetry}
\IncMargin{0.5em}
\begin{algorithm}[]
	\SetAlgoLined
	\SetAlgoNoEnd
	\caption[Refinement procedure]{Detection algorithm for row symmetry.}\label{alg:row}
	\Fn{\Row}{
		\Input{\inputb{formula $F$}\inputb{set $\sigma \subseteq \Lit(F)$ with $|\sigma| \geq 3$}}
		\Output{\outputb{matrix with row symmetry including $\sigma$, or $\bot$ if check fails}}
		$(G, \pi) \coloneqq G(F) $, $\pi' \coloneqq \IR((G, \pi), \epsilon)$\;
		
		\medskip
		\tcp{construct a candidate row for each $v \in \sigma$}
		\ForEach{$v \in \sigma$}{
			$\pi_v \coloneqq \IR((G,\pi'),v)$\;
%			individualize $v$ in $(G, \pi)$\;
%			apply color refinement\;
			%check that all fragments of $\sigma$ in $\pi'$ are singletons\;
			let $\tau$ be a list of literals that are singletons in $\pi_v$ but not in $\pi'$\;\label{alg:row:line:singleton}
			sort literals in $\tau$ according to their color in $\pi_v$\; \label{alg:row:line:ordersingleton}
			$\text{row}[v] \coloneqq \tau$\;
%			undo individualization and refinement\;
		}
		check that rows are pair-wise disjoint\;
		
		\medskip
		\tcp{verify that $M$ exhibits row symmetry}
		\ForEach{$i \in \{1 \dots{} |\sigma|-1\}$}{ \label{alg:row:check}
		$v \coloneqq \sigma[i-1]$; $v' \coloneqq \sigma[i]$\;
		check that $\transpose_F(\text{row}[v], \text{row}[v'])$ is a symmetry of $F$\;
		}
		\Return{matrix $M$ constructed from $\text{row}$}
	}
\end{algorithm}
\DecMargin{0.5em}

\textit{(Description of Algorithm~\ref{alg:row}.)} For an illustration, see Figure~\ref{fig:row:beginning}. 
The algorithm applies $\IR$ for each $v \in \sigma$ (see Figure~\ref{fig:row:ind}). 
%Then, it is checked whether the individualization of $v$ causes vertices $v'$ in other orbits to be individualized, i.e., whether fixing $v$ also fixes $v'$.
All vertices $v'$ in other orbits which are individualized in this process, i.e., which are fixed once $v$ is fixed, are added to the purported ``row'' of $v$.
%Having constructed a row for each vertex of $\sigma$, 
We then verify that every row transposition of the resulting matrix is indeed a symmetry of $F$.

\textit{(Correctness of Algorithm~\ref{alg:row}.)}
We first make the following observation for orbits of stabilizers in row interchangeability groups.
\begin{lemma} \label{lem:row:stab}
	Let $\Gamma = \Sym(n)$ be a row interchangeability group acting on $[n] \times [m]$. For every $(i,j) \in [n] \times [m]$, 
	%we have $\Gamma_{(i,j)} = \Sym([n] \setminus \{i\})$
	%In particular, $G_{(i,j)}$ is generated by all row transpositions not involving the $i$-th row. 
	the orbit of $(k,l) \in [n] \times [m]$ under the stabilizer $\Gamma_{(i,j)}$ of $(i,j)$ is given by 
	\[(k, l)^{\Gamma_{(i,j)}} = \begin{cases}
	\{(i,l)\} &\text{if }	k = i \\
	([n] \setminus \{i\}) \times \{l\} &\text{ otherwise.}
	\end{cases}
\]
\end{lemma}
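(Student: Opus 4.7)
The plan is to unpack the definition of the row interchangeability action and then carry out a direct orbit computation by case analysis on whether $k = i$.

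First I would recall that, by definition, $\Gamma = \Sym(n)$ acts on $[n] \times [m]$ by permuting only the row coordinate: for $\sigma \in \Sym(n)$ and $(a,b) \in [n] \times [m]$, we have $(a,b)^\sigma = (a^\sigma, b)$. From this it follows immediately that the stabilizer of $(i,j)$ is
\[
\Gamma_{(i,j)} = \{\sigma \in \Sym(n) : (i^\sigma, j) = (i,j)\} = \{\sigma \in \Sym(n) : i^\sigma = i\},
\]
which is just the point stabilizer $\Sym(n)_i$, acting as $\Sym([n] \setminus \{i\})$ on the remaining $n-1$ rows while fixing $i$. Crucially, the column index $j$ plays no role in this description, so the stabilizer is independent of $j$.

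Next I would compute the orbit of a point $(k,l)$ under $\Gamma_{(i,j)}$. Since every $\sigma \in \Gamma_{(i,j)}$ fixes the column, the orbit is contained in $[n] \times \{l\}$. For the first coordinate, there are two cases. If $k = i$, then $k^\sigma = i$ for all $\sigma \in \Gamma_{(i,j)}$, yielding the singleton orbit $\{(i,l)\}$. If $k \neq i$, then $k$ lies in the domain on which $\Sym([n]\setminus\{i\})$ acts transitively, so $k^\sigma$ ranges over all of $[n] \setminus \{i\}$ as $\sigma$ ranges over $\Gamma_{(i,j)}$, giving the orbit $([n] \setminus \{i\}) \times \{l\}$. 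Combining the two cases yields the formula in the statement.

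There is no real obstacle here: once one observes that the $\Sym(n)$-action ignores the column and that point stabilizers in $\Sym(n)$ act transitively on the complement of the fixed point, the lemma follows in a few lines. The only thing to be careful about is noting explicitly that the stabilizer does not depend on $j$, since this is what makes the second case clean and independent of whether $l = j$.
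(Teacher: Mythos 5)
Your proof is correct and follows essentially the same route as the paper's: identify the stabilizer $\Gamma_{(i,j)}$ as the permutations fixing row $i$ (and observe the column coordinate is always preserved), then split on whether $k = i$ and use transitivity of $\Sym([n]\setminus\{i\})$ on $[n]\setminus\{i\}$. Your version is a bit more explicit about the formal action and about the stabilizer being independent of $j$, but the argument is the same.
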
	
\begin{proof}
Interpreting $[n] \times [m]$ as $n \times m$-matrix $M$, recall that $\Gamma$ acts by permuting the rows of $M$. In other words, the stabilizer $\Gamma_{(i,j)}$ consisting of all row permutations that fix the $i$-th row and permute the other rows arbitrarily. Now consider the orbit of $(k,l) \in [n] \times [m]$ under the stabilizer $\Gamma_{(i,j)}$. If $k = i$, then $(k,l)$ can only be mapped to elements in the same row as $\Gamma_{(i,j)}$ fixes the $i$-th row of $M$. On the other hand, since $\Gamma$ acts by permuting the rows, every element of $M$ can only be mapped to elements in the same column, that is, $(k,l)$ must be fixed. Similarly, for $k \neq i$, the element $(k,l)$ can be mapped to all elements in the $l$-th column except for $(i,l)$. 	
\end{proof}
Next, we prove that the algorithm always returns correct symmetries of $F$ and that in case the model graph is Tinhofer, the algorithm is guaranteed to detect row interchangeability groups.
\begin{restatable}{theorem}{corrrowsym} 
	Let $F$ be a SAT formula.
	\begin{enumerate}
		\item If Algorithm~\ref{alg:row} returns a matrix $M$, every row permutation of $M$ is a symmetry of $F$.
		\item If $F$ exhibits row interchangeability with at least three rows including the input set $\sigma$ and $G(F)$ is a Tinhofer graph, Algorithm~\ref{alg:row} detects this structure and returns a corresponding matrix of literals.
	\end{enumerate}
\end{restatable}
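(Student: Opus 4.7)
The plan is to handle the two parts separately and exploit the structural properties of $\IR$ and of the stabilizer of a row-interchangeability group.

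For the first part, I would argue that the algorithm's verification loop (starting at line~\ref{alg:row:check}) already certifies soundness. Concretely, for every $i \in \{1, \dots, |\sigma|-1\}$, the algorithm checks that the transposition $\transpose_F(\text{row}[\sigma[i-1]], \text{row}[\sigma[i]])$ lies in $\Aut(F)$. These are precisely the $|\sigma|-1$ adjacent-row transpositions in the enumeration induced by $\sigma$. Since adjacent transpositions generate the full symmetric group $\Sym(|\sigma|)$, every row permutation of $M$ can be written as a composition of these verified automorphisms and is therefore itself an automorphism of $F$.

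For the second part, assume $F$ exhibits row interchangeability on a disjoint direct factor $L$ arranged as an $n \times m$ matrix $M$ with $n \geq 3$, and that $\sigma$ is a single column of $M$, i.e., an $\Aut(F)$-orbit of size $n$. Since $G(F)$ is Tinhofer, the coloring $\pi' = \IR(G(F), \epsilon)$ has color classes equal to the orbits of $\Aut(F)$, so each column of $M$ is a distinct color class of $\pi'$. Now fix $v \in \sigma$ lying in row $i$ and compute $\pi_v = \IR((G, \pi'), v)$. Applying the Tinhofer property recursively, the color classes of $\pi_v$ are the orbits of the stabilizer $\Aut(F)_v$. By the disjoint direct decomposition, $\Aut(F)_v$ acts as the row-stabilizer on $L$ and as the full factor on every other factor; Lemma~\ref{lem:row:stab} then describes the orbits on $L$ explicitly as the $m$ singletons forming row $i$ together with the size-$(n-1)$ sets obtained by removing row $i$ from each other column. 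Consequently, the literals that are singletons in $\pi_v$ but not in $\pi'$ are exactly the literals of row $i$ (together with negation partners, which propagate symmetrically through the negation edges of the model graph). Hence $\text{row}[v]$ equals the row of $v$ in $M$, and the rows constructed for distinct $v \in \sigma$ are pairwise disjoint, so the disjointness check succeeds.

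To conclude, I would show that the sort in line~\ref{alg:row:line:ordersingleton} aligns the columns consistently across different rows. For any $v, v' \in \sigma$ lying in rows $i, i'$ of $M$, the row transposition swapping rows $i$ and $i'$ is an automorphism $\varphi \in \Aut(F)$ mapping $v$ to $v'$ column-wise, i.e., sending each $(i, k)$ to $(i', k)$. By isomorphism-invariance of $\IR$ (Lemma~\ref{lem:irreq}.3), we have $\pi_{v'} = \pi_v^\varphi$, hence $\pi_{v'}((i', k)) = \pi_v((i, k))$. Therefore the $k$-th column entries in rows $i$ and $i'$ receive the same color, so sorting by color produces identical column orderings across rows. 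Consequently, $\transpose_F(\text{row}[v], \text{row}[v'])$ coincides with the genuine row transposition of $M$, which is an automorphism of $F$ by hypothesis, so every verification in the loop succeeds and the algorithm returns $M$. The main technical obstacle is exactly this cross-row column-alignment step: it is precisely what the isomorphism-invariance clause of Lemma~\ref{lem:irreq} is designed for, but care is needed to match the abstract orbit picture supplied by Lemma~\ref{lem:row:stab} to the geometric rows and columns of $M$ and to verify that negation partners enlarge every row in the same symmetric way, preserving the sort order.
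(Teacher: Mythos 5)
Your proof is correct and follows essentially the same strategy as the paper's: the first part uses the verification loop plus the fact that adjacent transpositions generate the symmetric group, and the second part combines the Tinhofer assumption with Lemma~\ref{lem:row:stab} to identify the singletons of $\pi_v$ with the row of $v$, then invokes isomorphism-invariance (Lemma~\ref{lem:irreq}, part 3) to show the sort aligns columns consistently across rows. You carry out the alignment step more explicitly than the paper does (deriving $\pi_{v'}((i',k)) = \pi_v((i,k))$ directly from $\pi_{v'} = \pi_v^\varphi$), which is a nice clarification; the one spot where you are looser than the paper is the treatment of negation partners, where the paper explicitly argues that $\neg l$ must lie in the same row of $M$ as $l$ (a consequence of having at least three rows), whereas you only assert that negation partners ``propagate symmetrically''. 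This does not affect the overall soundness of your argument.
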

\begin{proof}
	The first claim is guaranteed by the last part of Algorithm~\ref{alg:row} which ensures that transpositions of the rows of the returned matrix $M$ are indeed symmetries of $F$ (Line~\ref{alg:row:check}). 
	This implies that arbitrary row permutations are symmetries of $F$. 
	
	Now assume that $F$ exhibits a row symmetry with at least three rows including $\sigma$ and $G(F)$ is Tinhofer.
	We argue that the algorithm successfully detects this symmetry. 
	We remark that the orbits of $\Aut(G(F))$ restricted to the literals are precisely orbits of $\Aut(F)$.
	Let $L$ be the disjoint direct factor of $F$ containing $\sigma$ and assume that the literals in $L$ can be partitioned into a matrix $M$ that exhibits row symmetry (see Figure~\ref{fig:row:beginning}). 
	Due to the assumption that $G(F)$ is Tinhofer, if the vertex $v$ corresponding to a literal $l$ of $F$ is individualized, the resulting refined coloring consists of the orbits of $\Aut(G(F))_v$.
	In particular, due to Lemma~\ref{lem:row:stab}, the vertices in the row of $M$ are fixed and all other vertices are contained in orbits of size at least two since we have at least three rows (see Figure~\ref{fig:row:ind}). 
	Note that since we have at least three rows, $\neg l$ must be in the row of~$l$. 
	Hence after executing the loop for $v$, $\text{row}[v]$ contains precisely the vertices in the row of $v$. 
	Isomorphism-invariance of the IR routine (see Lemma~\ref{lem:irreq}) ensures that for each row, the order in which symmetrical singletons are colored will be consistent in each row (see Line~\ref{alg:row:line:ordersingleton}).
	This ensures that the rows we construct can indeed be transposed (see Line~\ref{alg:row:check} onwards), and the algorithm correctly returns a corresponding matrix.
	\end{proof}

\textbf{Recursive Row Symmetry.}
In practice, orbits often do not \emph{just} exhibit a row symmetry. 
In particular, we consider the case that an orbit of size $k$, with a natural symmetric action, is connected to another orbit of size $ck$, where the symmetric action acts on blocks of size $c$ (see Figure~\ref{fig:row:block}). 
We extend our algorithm to detect this particular case as follows: 
in Line~\ref{alg:row:line:singleton}, we add fragments of other colors instead of vertices in singletons to the row.
%we not only add vertices of singleton colors to the row, but fragments of other colors.
Let $c$ be a color of $\pi$ with a fragment $c'$ in $\pi'$.
We add the vertices $\pi'^{-1}(c')$ to the row, whenever $|\pi'^{-1}(c')||\sigma| = |\pi^{-1}(c)|$.
This means we consider vertices of $c'$, whenever there is the possibility that the color $c$ is split into $|\sigma|$ parts of size $|\pi'^{-1}(c')|$.
We call $\pi'^{-1}(c')$ a \emph{block} of its orbit.
On these blocks, we call our algorithm for row symmetry recursively.
Essentially, this enables us to detect recursive structures of row symmetry.  

\textbf{Row Symmetry in Stabilizer.}
A slight extension is that if the test for row symmetry fails, we recurse on the largest fragment from the first $\IR$ call and check whether it exhibits row symmetry.
This extension is used for the other detection algorithms as well.

\subsection{Row-Column Symmetry} \label{sec:rowcolumn}

Next, we describe a detection algorithm for row-column symmetry. 
As discussed in Section~\ref{sec:detectedsymmetries}, a disjoint direct factor exhibiting row-column symmetry consists of an orbit of literals and its negation, which is also an orbit of literals. 
We detect row-column symmetry only on one of these orbits, and expand the resulting automorphisms to the other one: For a permutation $\varphi$ of $\Lit(F)$ and all $l \in \Lit(F)$, let
%, let .
%Let $\varphi$ be a permutation of the literals.
%We define $\fix_{F}(\varphi) : \Lit(F) \to \Lit(F)$ where for each $l \in \Lit(F)$
$$ \fix_{F}(\varphi)(l) \coloneqq 
\begin{cases}
	\varphi(l) &\text{ if } l \in \supp(\varphi)\\
	\neg\varphi(\neg l) &\text{ if } \neg l \in \supp(\varphi)\\
	l &\text{ otherwise.}
\end{cases}$$

\begin{figure}
	\centering
	\begin{subfigure}[c]{0.4\textwidth}
		\centering
		\begin{tikzpicture}[yscale=0.75,scale=0.7]
			\draw[rounded corners=0.175cm,draw=white!0,fill=orange!50] (1-0.25, 0.5) rectangle (5+0.25, 4.5) {};
			\foreach \x in {1,...,5}{
				\foreach \y in {1,...,4}{
					\node[place] (a\x\y) at (1*\x,1*\y) {};
				}
			}
			\foreach \x in {1,...,4}{
				\foreach \y in {1,...,4}{
					\pgfmathtruncatemacro{\nextx}{\x + 1}
					\draw [densely dotted, thick] (a\x\y) to (a\nextx\y);
				}
			}
	
			\foreach \x in {1,...,5}{
				\draw[rounded corners=0.175cm, densely dotted, thick, draw=black] (\x-0.25, 0.5) rectangle (\x+0.25, 4.5) {};
			}
	
			\foreach \y in {1,...,3}{
				\draw [-Latex, thick] ($(a14)-(0.25,0)$) to [bend right=45] ($(a1\y)-(0.25,0)$);
			}
			\foreach \x in {2,...,5}{
				\draw [-Latex, thick] ($(a11)-(0,0.5)$) to [bend right=45] ($(a\x1)-(0,0.5)$);
			}
		\end{tikzpicture}
		\caption{A row-column symmetry with $4$ rows and $5$ columns.}
		\label{fig:rowcolumn:beginning}
	\end{subfigure}
	\hspace{0.5cm}
	\begin{subfigure}[c]{0.4\textwidth}
		\centering
		\begin{tikzpicture}[yscale=0.75,scale=0.7]
			\draw[rounded corners=0.175cm,draw=white!0,fill=white!0] (1-0.25, 0.5) rectangle (5+0.25, 4.5) {};
			\draw[rounded corners=0.175cm,draw=white!0,fill=orange!50] (2-0.25, 0.5) rectangle (5+0.25, 3.5) {};
			\draw[rounded corners=0.175cm,draw=white!0,fill=cyan!50] (2-0.25, 3.66) rectangle (5+0.25, 4.33) {};
			\draw[rounded corners=0.175cm,draw=white!0,fill=purple!50] (1-0.25, 0.5) rectangle (1+0.25, 3.5) {};
			\draw[rounded corners=0.175cm,draw=white!0,fill=gray!50] (1-0.25, 3.66) rectangle (1+0.25, 4.33) {};
			\foreach \x in {1,...,5}{
				\foreach \y in {1,...,4}{
					\node[place] (a\x\y) at (1*\x,1*\y) {};
				}
			}
			\node[cross,draw=white] (ac14) at (1*1,1*4) {};

			\foreach \x in {1,...,4}{
				\foreach \y in {1,...,4}{
					\pgfmathtruncatemacro{\nextx}{\x + 1}
					\draw [densely dotted, thick] (a\x\y) to (a\nextx\y);
				}
			}
	
			\foreach \x in {1,...,5}{
				\draw[rounded corners=0.175cm, densely dotted, thick, draw=black] (\x-0.25, 0.5) rectangle (\x+0.25, 4.5) {};
			}
	
			\foreach \y in {1,...,2}{
				\draw [-Latex, thick] ($(a13)-(0.25,0)$) to [bend right=45] ($(a1\y)-(0.25,0)$);
			}
			\foreach \x in {3,...,5}{
				\draw [-Latex, thick] ($(a21)-(0,0.5)$) to [bend right=45] ($(a\x1)-(0,0.5)$);
			}
			\draw [-Latex, thick, draw=none] ($(a11)-(0,0.5)$) to [bend right=45] ($(a51)-(0,0.5)$);
		\end{tikzpicture}
		\caption{Individualizing a vertex identifies its row and column.}
		\label{fig:rowcolumn:ind}
	\end{subfigure}
	\caption{Illustrations of different aspects of row-column symmetry.}
	\label{fig:rowcolumn}
\end{figure}
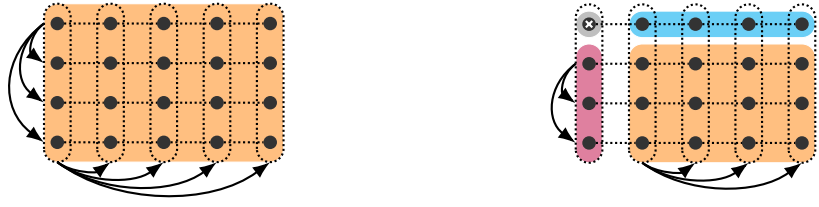

\SetKwFunction{RowColumn}{DetectRowColumnSymmetry}
\IncMargin{0.5em}
\begin{algorithm}[ht]
	\SetAlgoLined
	\SetAlgoNoEnd
	\caption[Refinement procedure]{Detection algorithm for row-column symmetry.}\label{alg:rowcolumn}
	\Fn{\RowColumn}{
		\Input{\inputb{formula $F$}\inputb{set $\sigma \subseteq \Lit(F)$ }}
		\Output{\outputb{candidate matrix $M$, or $\bot$ if check fails}}
		$(G, \pi) \coloneqq G(F) $, $\pi' \coloneqq \IR((G, \pi),\epsilon)$\;
		choose arbitrary $v \in \sigma$\; 
		$\pi_v \coloneqq \IR((G,\pi),v)$\; \label{alg:rowcolumn:ircall}
		check that $\sigma$ has $4$ fragments in $\pi_v$\;
		label fragments of $\sigma$ in $\pi_v$ not containing $v$ as $\sigma_1$, $\sigma_2$, $\sigma_3$ in increasing size\;
		
		\medskip
		\tcp{we determine ``coordinates'' in matrix relative to $v$}
		$\text{row}[v] \coloneqq \text{col}[v] = v$  \tcp*{$v$ defines a row and a column}
		\ForEach{$r \in \sigma_1$}{
			$\text{row}[r] \coloneqq v$, $\text{col}[r] \coloneqq r$ \tcp*{$r$ is in row of $v$, and defines a column}
		    $\pi_r \coloneqq \IR((G,\pi'),r)$\;
			let $\tau$ be the fragment of $\sigma$ in $\pi_r$ of size $|\sigma_2|$ not containing $v$ if exists\;
			\lForEach*{$t \in \tau$}{$\text{col}[t] \coloneqq r$\tcp*{$t$ is in column of $r$}\label{alg:rowcol:setcol}}
		}
					
		\ForEach{$c \in \sigma_2$}{
			$\text{col}[c] \coloneqq v$, $\text{row}[c] = c$ \tcp*{$c$ is in column of $v$, and defines a row}
			$\pi_c \coloneqq \IR((G,\pi'),c)$\;
			let $\tau$ be the fragment of $\sigma$ in $\pi_c$ of size $|\sigma_1|$ not containing $v$ if exists\;
			\lForEach*{$t \in \tau$}{$\text{row}[t] \coloneqq c$\tcp*{$t$ is in row of $c$}\label{alg:rowcol:setrow}}
		}
		construct matrix $M$ where $M[r, c] = v'$ with $\text{row}[v'] = r$ and $\text{col}[v'] = c$\;
		\medskip
		\tcp{verify that $M$ exhibits row-column symmetry}
		check that every vertex in $\sigma$ has a unique row and a unique column label\;
		check that distinct vertices are assigned distinct label pairs\;
		check that $M$ has pairwise disjoint rows, and pairwise disjoint columns\;
		\lForEach*{$r \in \sigma_1$}{ \label{alg:rowcol:check1}
			check that $\fix_F(\transpose_F(M[*, r], M[*, v]))$ is a symmetry of $F$\tcp*{$M[*, x]$ denotes column of $x$}
		}
		\lForEach*{$c \in \sigma_2$}{ \label{alg:rowcol:check2}
			check that $\fix_F(\transpose_F(M[c, *], M[v, *]))$ is a symmetry of $F$ \tcp*{$M[x, *]$ denotes row of $x$}
		}	
		\Return{$M$}
	}
\end{algorithm}
\DecMargin{0.5em}

\textit{(Description of Algorithm~\ref{alg:rowcolumn}.)} 
For an illustration, see Figure~\ref{fig:rowcolumn:beginning}. Given a set $\sigma \subseteq \Lit(F)$, we apply $\IR$ to a fixed vertex $v \in \sigma$ (see Figure~\ref{fig:rowcolumn:ind}). Assuming that a row-column symmetry is present, this determines a purported ``row'' $\text{row}[v]$ and ``column'' $\text{col}[v]$ of $v$. The algorithm now successively individualizes the vertices in $\text{row}[v]$ and $\text{col}[v]$. This way, every vertex in $\sigma$ is assigned a reference vertex in each of $\text{row}[v]$ and $\text{col}[v]$, determining its position in the purported matrix. We then verify that the matrix is well-defined and that every row and column transposition, expanded to $\neg \sigma$, is indeed a symmetry of $F$. 
%
%
%It is then checked whether this causes $\sigma$ to split into four orbits, one of them consisting of the singleton $v$. The other orbits are labeled $\sigma_1$, $\sigma_2$, $\sigma_3$ and sorted in increasing size. Now, one after the other, every vertex in $\sigma_1$ is individualized. If this causes a vertex to become a singleton in the resulting refinement, these vertices belong to the purported row of $w$. After that, the vertices in $\sigma_2$ are individualized, leading to purported columns of the matrix. Having constructed a row and a column for each vertex of $\sigma$, the candidate matrix is then verified by checking whether each transposition of rows and of the columns is a symmetry of $F$.
%\markus{this description is not right, things do not become singleton... needs to be aligned again to algo}

%\textit{(Runtime of Algorithm~\ref{alg:rowcolumn}.)} Analogously to the case of row interchangeability, the runtime of the algorithm is given by $\mathcal{O}(|\sigma|(|F|+(|G|\log|G|)))$. 

\textit{(Correctness of Algorithm~\ref{alg:rowcolumn}.)} 
In order to prove the correctness of Algorithm~\ref{alg:rowcolumn}, we first observe the following: 
\begin{lemma} \label{lem:rowcolumn:observe}
	Let $\Gamma = \Sym(n) \times \Sym(m)$ be a row-column symmetry group acting on $[n] \times [m]$. For every $(i,j) \in [n] \times [m]$, 
	%the stabilizer is given by $\Gamma_{(i,j)} = \Sym([n] \setminus \{i\}) \times \Sym([m] \setminus \{j\})$. 
	%In particular, $G_{(i,j)}$ is generated by all row transpositions not involving the $i$-th row and all column transpositions not involving the $j$-th column. 
	the orbit of $(k,l) \in [n] \times [m]$ under the action of $\Gamma_{(i,j)}$ is given by 
	\[ (k,l)^{\Gamma_{(i,j)}} = \begin{cases}
	\{(k,l)\} & \text{if } (k,l) = (i,j) \\
	\{i\} \times ([m] \setminus \{j\}) & \text{if } k = i, \, l \neq j \\
	([n] \setminus \{i\}) \times \{j\} &\text{if } k \neq i, \, l = j \\
	([n] \setminus \{i\}) \times ([m] \setminus \{j\}) & \text{otherwise.}
	\end{cases}	
	\]
	\end{lemma}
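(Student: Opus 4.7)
The plan is to reduce the claim to two straightforward facts about the symmetric group: (a)~the stabilizer of a point in $\Sym(n)$ is itself transitive on the remaining $n-1$ points, and (b)~componentwise actions of direct products have componentwise stabilizers and orbits.

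First, I would compute $\Gamma_{(i,j)}$ explicitly. Since $\Gamma = \Sym(n) \times \Sym(m)$ acts componentwise via $(k,l)^{(\sigma_1,\sigma_2)} = (k^{\sigma_1}, l^{\sigma_2})$, an element $(\sigma_1,\sigma_2)$ fixes $(i,j)$ if and only if $\sigma_1 \in \Sym(n)_i$ and $\sigma_2 \in \Sym(m)_j$. Hence
\[
\Gamma_{(i,j)} = \Sym(n)_i \times \Sym(m)_j,
\]
and so the orbit of $(k,l)$ under $\Gamma_{(i,j)}$ is the Cartesian product $k^{\Sym(n)_i} \times l^{\Sym(m)_j}$.

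Next, I would handle each component separately. For $\Sym(n)_i$, a standard fact gives $k^{\Sym(n)_i} = \{k\}$ if $k=i$ and $k^{\Sym(n)_i} = [n] \setminus \{i\}$ otherwise, and analogously for $\Sym(m)_j$ acting on $l$. The four cases in the statement correspond exactly to the four combinations $(k=i\text{ or not}) \times (l=j\text{ or not})$; writing out each product of the two one-sided orbits immediately yields the piecewise formula.

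There is no real obstacle here, since the claim is essentially a restatement of the product structure of $\Gamma$ together with the elementary transitivity property of point-stabilizers in $\Sym(n)$. The only thing to keep in mind while writing up is to justify the transitivity of $\Sym(n)_i$ on $[n] \setminus \{i\}$ for completeness (which follows because for any $k, k' \in [n] \setminus \{i\}$ the transposition $(k\ k')$ fixes $i$ and swaps $k$ with $k'$).
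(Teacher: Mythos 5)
Your proposal is correct and follows essentially the same route as the paper's proof: identify $\Gamma_{(i,j)}$ as the product $\Sym(n)_i \times \Sym(m)_j$ of point stabilizers and then determine the orbit componentwise. The paper phrases this in terms of a matrix $M$ and argues case by case without explicitly stating the ``orbit of a componentwise product action is the Cartesian product of orbits'' principle, but the underlying argument is the same as yours.
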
	
	
	\begin{proof}
	We identify $[n] \times [m]$ with the entries of an $n \times m$-matrix $M$. Then $\Gamma$ acts on $M$ by permuting the rows and the columns of $M$. Let $\pi \in \Gamma$ be a permutation that fixes the entry $(i,j)$. Write $\pi = (\pi_r, \pi_c)$, where $\pi_r$ is a permutation of the rows and $\pi_c$ a permutation of the columns of $M$. Then $\pi_r$ fixes the $i$-th row and $\pi_c$ fixes the $j$-th column of $M$. On the other hand, every such element of $\Gamma$ fixes the entry $(i,j)$.  
	
	Now consider the orbit of $(k,l) \in [n] \times [m]$ under the stabilizer $\Gamma_{(i,j)}$. By definition, it consists of $(k,l)$ for $(k,l) = (i,j)$. For $k = i$ and $l \neq j$, we can map $(k,l) = (i,l)$ to all elements in the $i$-row, except for $(i,j)$. Similarly, we argue if $k \neq i$ and $l = j$. Finally, if $k \neq i$ and $l \neq j$, we can map $(k,l)$ to all vertices $(k', l')$ with $k' \neq i$ and $l' \neq j$. This shows the claim. 
	\end{proof}
	
	We prove that the algorithm always returns correct symmetries of $F$ and that in case the model graph is Tinhofer, it is guaranteed to detect row-column symmetry groups.

\begin{restatable}{theorem}{corrrowcolumnsym}
Let $F$ be a SAT formula. 
\begin{enumerate}
\item If Algorithm~\ref{alg:rowcolumn} returns a matrix $M$ of literals, every permutation of the rows or the columns of $M$, expanded to the negations of the literals in $M$, is a symmetry of $F$.
\item If $F$ exhibits a row-column symmetry with at least three rows and at least three columns including $\sigma$ and $G(F)$ is a Tinhofer graph, then Algorithm~\ref{alg:rowcolumn} detects this structure and returns a corresponding matrix representation of the literals in $\sigma$. 
\end{enumerate}
\end{restatable}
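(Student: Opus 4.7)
The plan is to prove both statements by carefully tracking what each $\IR$ call reveals about $\sigma$ and combining Lemma~\ref{lem:rowcolumn:observe} with the Tinhofer assumption.

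For the first claim, I would simply observe that the verifications performed at Lines~\ref{alg:rowcol:check1} and~\ref{alg:rowcol:check2} directly certify that each permutation $\fix_F(\transpose_F(M[*,r], M[*,v]))$ for $r \in \sigma_1$, and each $\fix_F(\transpose_F(M[c,*], M[v,*]))$ for $c \in \sigma_2$, lies in $\Aut(F)$. Since transpositions of a fixed element with every other element generate the full symmetric group, the verified column transpositions generate $\Sym(\{v\} \cup \sigma_1)$ acting on the columns of $M$, and analogously the verified row transpositions generate the full column permutation group of $M$. Because $\fix_F$ implements exactly the natural extension of the $\sigma$-action to $\neg\sigma$ described in Section~\ref{sec:detectedsymmetries}, any product of verified generators remains a symmetry of $F$ and realizes the full row-column permutation action on $M$ together with its negations.

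For the second claim, assume $F$ exhibits a row-column symmetry on a disjoint direct factor $L \supseteq \sigma$ with $n,m \geq 3$ and that $G(F)$ is Tinhofer. After Line~\ref{alg:rowcolumn:ircall} individualizes $v$, Lemma~\ref{lem:irreq} combined with the Tinhofer property implies that the color classes of $\pi_v$ restricted to $\sigma$ coincide with the orbits of $\Aut(G(F))_v$ on $\sigma$. Applying Lemma~\ref{lem:rowcolumn:observe} to the matrix coordinate of $v$ yields exactly four such orbits: the singleton $\{v\}$, the row of $v$ minus $v$ (size $m-1$), the column of $v$ minus $v$ (size $n-1$), and the remaining $(n-1)(m-1)$ entries. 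Because $(n-1)(m-1) > \max(n-1,m-1)$ whenever $n,m \geq 3$, the ``rest'' fragment is strictly largest, so $\sigma_3$ is correctly identified and $\{\sigma_1,\sigma_2\}$ equals $\{\text{row of }v\setminus\{v\},\ \text{column of }v\setminus\{v\}\}$ in some order.

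Next, for each $r \in \sigma_1$ I would reapply Lemma~\ref{lem:rowcolumn:observe} to the IR call individualizing $r$. Since $r$ lies in the row of $v$, the row of $r$ contains $v$ while the column of $r$ does not; hence the filter ``fragment of size $|\sigma_2|$ not containing $v$'' uniquely selects the column of $r$ minus $r$, and Line~\ref{alg:rowcol:setcol} assigns the correct column label to exactly these vertices. A symmetric argument handles Line~\ref{alg:rowcol:setrow} for $c \in \sigma_2$. Consequently every vertex of $\sigma$ receives a unique and consistent $(\text{row},\text{col})$ pair matching its actual matrix coordinates, so all well-definedness checks succeed; the transposition checks then pass because each tested permutation is a genuine row or column transposition of the assumed row-column action, extended to $\neg\sigma$ by $\fix_F$.

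The main obstacle I anticipate is the square case $n = m$, where $|\sigma_1| = |\sigma_2|$ and ``increasing size'' is ambiguous. The resolution is that swapping the labels of $\sigma_1$ and $\sigma_2$ produces the transpose $M^{\top}$, which still exhibits row-column symmetry and passes every verification, so either choice is acceptable. A secondary subtlety is that the inner IR calls individualize $r$ (respectively $c$) starting from $\pi'$ rather than from $\pi_v$, but by the conjugation-invariance in Lemma~\ref{lem:irreq} and the symmetry of the row-column action on its matrix, the orbit structure of $\Aut(G(F))_r$ is the same as that of $\Aut(G(F))_v$ with the centre shifted to $r$, so Lemma~\ref{lem:rowcolumn:observe} applies verbatim.
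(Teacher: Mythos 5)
Your proof is correct and takes essentially the same approach as the paper's: both certify the first claim from the transposition checks at Lines~\ref{alg:rowcol:check1}--\ref{alg:rowcol:check2} and establish the second claim by combining Lemma~\ref{lem:rowcolumn:observe} with the Tinhofer property to identify $\sigma_3$ as the strictly largest fragment and to select the correct column (resp.\ row) fragment by size and by the presence or absence of $v$. The only blemish is a slip in the first paragraph, where ``the verified row transpositions generate the full \emph{column} permutation group of $M$'' should read \emph{row} permutation group; your explicit treatment of the square case and of the inner IR calls starting from $\pi'$ is a sound, if slightly more verbose, version of the paper's ``without loss of generality'' step.
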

\begin{proof}
The first claim is guaranteed by the last part of Algorithm~\ref{alg:rowcolumn} which ensures that transpositions of the rows (Line~\ref{alg:rowcol:check1}) and columns (Line~\ref{alg:rowcol:check2}) of the returned matrix $M$, expanded to the corresponding negated literals, are indeed symmetries of $F$. 
By suitably composing such transpositions, we obtain that every permutation of the rows or columns of $M$ induces a symmetry of $F$ in this way.

Now assume that $G(F)$ is Tinhofer and that $F$ exhibits row-column symmetry with at least three rows and columns on $\sigma$. In other words, the literals in $\sigma$ can be arranged in a matrix $M$ on which $\Aut(F)$ acts by row and column permutations (see Figure~\ref{fig:rowcolumn:beginning}). Individualizing a fixed vertex $v \in \sigma$ causes $\sigma$ to split into four fragments according to the orbits of the stabilizer $\Aut(F)_v$: 
the singleton $\{v\}$, two fragments $\sigma_1$ and $\sigma_2$ corresponding to the remainders of the row and the column of $M$ containing $v$, and a fragment $\sigma_3$ containing the remaining vertices (see Lemma~\ref{lem:rowcolumn:observe} Figure~\ref{fig:rowcolumn:ind}).
Since we assume that $M$ has at least three rows and columns, $\sigma_1, \sigma_2, \sigma_3$ are non-singletons and $\sigma_3$ is the largest fragment. Without loss of generality, let $\sigma_1 \cup \{v\}$ be the row and $\sigma_2 \cup \{v\}$ be the column of~$v$ in~$M$. Every column of $M$ is determined by the unique element of $\sigma_1 \cup \{v\}$ that it contains (similarly for the rows). Individualizing a vertex $r \in \sigma_1$ leads a similar split of $\sigma$ into four fragments. The fragments corresponding to the row and column of $r$ can be distinguished by observing that $v$ lies in the same row, but not in the same column as $r$. For all vertices in the column of $r$, we store this information (Line~\ref{alg:rowcol:setcol}). Similarly, we proceed for the columns (Line~\ref{alg:rowcol:setrow}). 
After this procedure, every element of $\sigma$ is assigned a row and column representative in $\sigma_2$ and $\sigma_1$ respectively, which, up to a permutation of the rows and columns, allows us to recover the matrix~$M$. 
\end{proof}

\subsection{Johnson Symmetry} \label{sec:johnson}

Finally, we describe a procedure to detect Johnson actions. 
We remark that there is a classic algorithm to detect Johnson groups \cite{permNC}.
A difference to our heuristic is that we do not know the generators of the group, and instead apply techniques directly on a given graph.

Our aim is to identify the variables in the input set $\sigma$ with the 2-subsets of $[n]$, where $|\sigma| = \binom{n}{2}$. 
We thus search for a bijection $b \colon \sigma \to \binom{[n]}{2}$ such that $\Aut(F)$ acts as the Johnson group $\mathcal{J}_n$ on $\sigma$ via this bijection
%that is compatible with $\Aut(F)$ in the following sense: for every $\varphi \in \Aut(F)$, there exists $j_\varphi \in \mathcal{J}_n$ with the following property: for all $v \in \sigma$, we have $\varphi(v) = w$ precisely if $j_{\varphi}(b(v)) = b(w)$. In other words, if we identify the literals $v \in \sigma$ with 2-sets $b(v)$, the automorphism $\varphi$ induces a Johnson action 
(see Section~\ref{sec:detectedsymmetries}, Figure~\ref{fig:groupstructures}, and Figure~\ref{fig:johnsondetail}). To avoid confusion, we refer to the elements of $[n]$ as \emph{labels} and to those of $\Lit(F)$ as \emph{literals} or \emph{vertices} of $G(F)$.

\SetKwFunction{DetectJohnson}{DetectJohnson}
\IncMargin{0.5em}
\begin{algorithm}[]
	\SetAlgoLined
	\SetAlgoNoEnd
	\caption[Refinement procedure]{Detection algorithm for Johnson actions.}\label{alg:johnson}
	\Fn{\DetectJohnson}{
		\Input{\inputb{formula $F$}\inputb{set $\sigma \subseteq \Lit(F)$ }}
		\Output{\outputb{bijective labeling of $\sigma$ by 2-subsets of $[n]$, or~$\bot$ if check fails}}
		$(G, \pi) \coloneqq G(F) $, $\pi' \coloneqq \IR((G, \pi), \epsilon)$\;
		
		check that $|\sigma| \geq 28$ and $|\sigma| = \binom{n}{2}$ for some $n \in \mathbb{N}$\;
		\lForEach{vertex $v \in \sigma$}{set $\text{label}[v] = []$}
		vnr = 1\;
		
		%\tcp{compute vertex labels}
		\While{there are vertices $v \in \sigma$ with $|\text{label}[v]| \leq 1$}{
			$E_i \coloneqq E_j \coloneqq E_k \coloneqq \{\}$\;
			choose $v \in \sigma$ with $|\text{label}[v]| \leq 1$\; 
			$\pi_v \coloneqq \IR((G,\pi'),v)$\; \label{alg:johnson:firstindividualization}
			%			individualize $v$ in $(G, \pi)$\;
			%			apply color refinement to obtain coloring $\pi_v$\;
			check that number of fragments of $\sigma$ in $\pi_v$ is $3$\;
			let $\sigma_v$ be the smaller non-singleton fragment\;
			\lForEach{$x \in \sigma_v$}{\label{alg:johnson:adv}
				add $x$ to $\text{ad}[v]$
			}

			choose arbitrary $w \in \text{ad}[v]$\;
			%and individualize $w$ in $(G,\pi)$\;
			%apply color refinement to obtain coloring $\pi_w$\;
			$\pi_w \coloneqq \IR((G,\pi'),w)$\;
			check that number of fragments of $\sigma$ in $\pi_w$ is $3$\;
			let $\sigma_w$ be the smaller non-singleton fragment\;
			\lForEach{$x \in \sigma_w$}{\label{alg:johnson:adw} add $x$ to $\text{ad}[w]$}	
						
			$\pi_{v,w} \coloneqq \IR((G,\pi_v),w)$\;
			
			%			individualize $w$ in $(G, \pi_v)$\;
			%			apply color refinement to obtain coloring $\pi_{v,w}$\;
			
			let $\{y\}$ be the unique singleton fragment of $\sigma$ in $\pi_{v,w}$ different from $\{v\}$ and $\{w\}$ if existent, otherwise \Return{$\bot$}\;
			$\pi_y \coloneqq \IR((G,\pi'),y)$\;
			check that number of fragments of $\sigma$ in $\pi_y$ is $3$\;
			let $\sigma_y$ be the smaller non-singleton fragment\;
			\lForEach{$x \in \sigma_y$}{\label{alg:johnson:ady} add $x$ to $\text{ad}[y]$}
			
			add $v$ to $E_i$ and $E_j$, add $w$ to $E_j$ and $E_k$, add $y$ to $E_i$ and $E_k$\;
			
			\lForEach{$x \in \text{ad}[v] \cap \text{ad}[y]$ and $x \neq w$}{\label{alg:johnson:ei}add $x$ to $E_i$}
			
			\lForEach{$x \in \text{ad}[v] \cap \text{ad}[w]$ and $x \neq y$}{\label{alg:johnson:ej}add $x$ to $E_j$}
		
			\lForEach{$x \in \text{ad}[w] \cap \text{ad}[y]$ and $x \neq v$}{\textbf{\label{alg:johnson:ek}}add $x$ to $E_k$}

			\ForEach{$E \in \{E_i, E_j, E_k\}$}{\label{alg:johnson:marking}
				\If{$\bigcap_{v \in E} \text{label}[v] = \emptyset$}{
					append vnr to $\text{label}[v]$ for $v \in E$\;
					vnr += 1\;
				}
			}
			check that new labels were added to label in this iteration\;
		}
		%	\ForEach{vertex $v$ with $\text{len(mark)} = 1$}{add vrn to $\text{mark}[v]$}
		
		\tcp{verify that $F$ exhibits Johnson symmetry}
		%$n \coloneqq \text{vnr}-1$\;
		verify that $\text{label}$ induces a bijection between $\sigma$ and $\binom{[n]}{2}$\;
		\ForEach{$i \in [n-1]$}{\label{alg:johnson:checkpermutation} 
			let $\beta$ denote the permutation of $\sigma$ induced by the Johnson action induced by $(i,i+1) \in \Sym(n)$ using label\;
			check that $\text{expand}_F(\beta)$ is a symmetry of $F$\; }
%			check that the Johnson action $j_\lambda$ induced by $\lambda \coloneqq (i,i+1) \in \Sym(n)$ on $\sigma$ using label is a symmetry of $F$\;}
		\Return{$\text{label}$}
	}
\end{algorithm}
\DecMargin{0.5em}

\begin{figure}
	\centering
		\begin{tikzpicture}[yscale=1,scale=0.6]
			\begin{scope}
			\foreach \a in {1,2,...,5}{
				\draw (\a*360/5: 1cm) node[place,minimum size=1.25mm,fill=black!50,draw=black!50] (g\a) {};
			}

			\foreach \x in {1,...,5}{
				\foreach \y in {\x,...,5}{
					\ifthenelse{\x = \y}{}{
					\draw [thick,draw=black] (g\x) to (g\y);
					}
				}
			}
			\end{scope}
			\draw [very thick, -Latex](2,0) -- (3,0);
			\begin{scope}[xshift=5cm]
				\foreach \a in {1,2,...,5}{
					\draw (\a*360/5: 1cm) node[place,minimum size=1.25mm,fill=black!50,draw=black!50] (g\a) {};
				}
	
				\foreach \x in {1,...,5}{
					\foreach \y in {\x,...,5}{
						\ifthenelse{\x = \y}{}{
						\ifthenelse{\x = 2 \AND \y = 3}{
						\draw [very thick,draw=cyan] (g\x) to (g\y);
						}{
						\draw [thick,draw=black] (g\x) to (g\y);
						}
						}
					}
				}
			\end{scope}
			\draw [very thick, -Latex](7,0) -- (8,0);
			\begin{scope}[xshift=10cm]
				\foreach \a in {1,2,...,5}{
					\draw (\a*360/5: 1cm) node[place,minimum size=1.25mm,fill=black!50,draw=black!50] (g\a) {};
				}
	
				\foreach \x in {1,...,5}{
					\foreach \y in {\x,...,5}{
						\ifthenelse{\x = \y}{}{
						\ifthenelse{\x = 2 \AND \y = 3}{
						\draw [very thick,draw=cyan] (g\x) to (g\y);
						}{
						\ifthenelse{\x = 2 \OR \y = 3 \OR \x = 3 \OR \y = 2}{
						\draw [thick,draw=orange] (g\x) to (g\y);
						}{
						\draw [thick,draw=black] (g\x) to (g\y);
						}
						}
						}
					}
				}
			\end{scope}
		\end{tikzpicture}
		\caption{Individualizing a variable $v$ in a  Johnson symmetry (represented by edges in the illustration), splits the set of edges into edges incident to $v$, and edges not incident to $v$.}
		\label{fig:johnson:ind}
\end{figure}
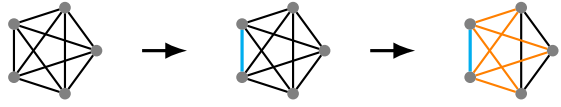

\textit{(Description of Algorithm~\ref{alg:johnson}.)} 
Suppose that $F$ exhibits a Johnson symmetry on $\sigma$. As described above, there is an (unknown) bijection $b \colon \sigma \to \binom{[n]}{2}$ (see Figure~\ref{fig:johnsondetail}). We maintain a list $\text{label}[v]$ for every $v \in \sigma$, to which we add $i \in [n]$ when we deduce that~$i \in b(v)$. If the algorithm returns a list $\text{label}$, a possible bijection $b$ is given by $b(v) = \text{label}[v]$ for all $v \in \sigma$.
Note that $b$ is only determined up to permutation of the labels, so our algorithm merely determines vertices obtaining the same label and assigns the labels consecutively.

The algorithm proceeds as follows: we apply $\IR$ to $v \in \sigma$, yielding a coloring $\pi_v$. Write $b(v) = \{i,j\}$ for some $i,j \in [n]$. The coloring $\pi_v$ has three fragments: $\{v\}$, the fragment $\sigma_v$ containing all $u \in \sigma$ with $|b(u) \cap \{i,j\}| =1$, and the remaining elements (see Figure~\ref{fig:johnson:ind}). We call the vertices in $\sigma_v$ \emph{adjacent} to $v$ and collect them in $\text{ad}[v]$. Now choose $w \in \text{ad}[v]$. We can assume $b(w) = \{j,k\}$ for some $k \notin \{i,j\}$. As before, we find the vertices adjacent to $w$ by applying $\IR$ to $w$. Individualizing both $v$ and $w$, the resulting coloring $\pi_{v,w}$ contains exactly one further singleton consisting of $y \in \sigma$ with $b(y) = \{i,k\}$. Now $\text{ad}[v] \cap \text{ad}[w] = \{y\} \cup \{u \in \sigma \colon b(u) = \{j,r\} \text{ for some } r \notin \{i,j,k\}\}$. 
%consists of $y$ and all $u \in \sigma$ with $b(u) = \{j,r\}$ for $r \in [n] \setminus \{i,j,k\}$. 
The vertices in $\text{ad}[v] \cap \text{ad}[w] \setminus \{y\}$ thus obtain the label $j$. Similarly, we determine the vertices obtaining the label $i$ or $k$. After ensuring that the labels have not been considered previously, we add them to the list $\text{label}$ for the respective vertices.

%The algorithm applies individualization-refinement to a fixed vertex $v \in \sigma$. It is checked whether this causes $\sigma$ to split into three orbits, one of them is the singleton $\{v\}$. The elements in the smaller non-singleton orbit are the purported elements incident to $v$ and they are stored in $\text{ad}[v]$. Half of these edges is incident to a label $i$ and the other half to the label $j$ of $v$. Fix one of these elements $w$ and repeat the above procedure by individualizing $w$. Now the elements in $\text{ad}[v] \cap \text{ad}[w]$ are all incident to a label $j$, and we store them in $E_j$. We now consider the coloring obtained by individualizing both $v$ and $w$. In this coloring, the elements of one fragment of size $n-3$ correspond to edges with label $i$ and the other with label $k$. At the end of the algorithm, we mark the vertices in $E_i$, $E_j$, $E_k$ by the labels that we found. We continue until every vertex has two labels.

%\markus{to a fixed vertex?}
%\markus{$\text{ad}[v] \cap \text{ad}[w]$ also contains $(i,k)$ I think... maybe description first finds these three "triangle" singletons $v,w,y$ and then proceeds from there}
%
%\sofia{make this description nicer} 

%\textit{(Runtime of Algorithm~\ref{alg:johnson}.)} \isofia{todo - the while loop is executed at most $n$ times, where $|\sigma| = n(n-1)/2$}

\textit{(Correctness of Algorithm~\ref{alg:johnson}.)} 
We again make some observations about stabilizers in Johnson groups:
\begin{lemma}\label{lemma:johnsonaux}
Let $n \in \mathbb{N}$ and consider the Johnson group $\Gamma \coloneqq \mathcal{J}_n$, acting on $2$-subsets of $[n]$. 
\begin{enumerate}
\item For $\{i,j\} \in \binom{[n]}{2}$, the orbit of $S \in \binom{[n]}{2}$ under the stabilizer $\Gamma_{\{i,j\}}$ of $\{i,j\} \in \binom{[n]}{2}$ is given by 
\[S^{\Gamma_{\{i,j\}}} = 
\begin{cases}
\{S\} &\text{if } S = \{i,j\} \\
\{T \in \binom{[n]}{2} \colon |T \cap \{i,j\}| = 1\} & \text{if } |S \cap \{i,j\}| = 1 \\
\{T \in \binom{[n]}{2} \colon T \cap \{i,j\} = \emptyset\} &\text{if } S \cap \{i,j\} = \emptyset.	
\end{cases}	
\]
\item For $k \neq i,j$, 
%we have $\Gamma_{\{i,j\}} \cap \Gamma_{\{i,k\}} = \{\alpha \in \Sym(n) \colon i^\alpha = i,\, j^\alpha = j,\, k^\alpha = k\}$. The 
the orbit of $S \in \binom{[n]}{2}$ under $\Gamma_{\{i,j\}} \cap \Gamma_{\{i,k\}}$ is given by
\[
S^{\Gamma_{\{i,j\}} \cap \Gamma_{\{i,k\}}} = \begin{cases}
\{S\} &\text{if } S \in \{\{i,j\}, \{i,k\}, \{j,k\}\} \\
\{\{i,r\} \colon r \in [n] \setminus \{i,j,k\}\} & \text{if } S = \{i,s\} \text{ for some } s\in [n] \setminus \{i,j,k\} \\	
\{\{j,r\} \colon r \in [n] \setminus \{i,j,k\}\} & \text{if } S = \{j,s\} \text{ for some } s\in [n] \setminus \{i,j,k\} \\
\{\{k,r\} \colon r \in [n] \setminus \{i,j,k\}\} & \text{if } S = \{k,s\} \text{ for some } s\in [n] \setminus \{i,j,k\} \\
\{S \in \binom{[n]}{2} \colon S \cap \{i,j,k\} = \emptyset\} &\text{if } S \cap \{i,j,k\} = \emptyset.
\end{cases}	
\]
\end{enumerate}
\end{lemma}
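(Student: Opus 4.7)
My plan is to lift the computation from the Johnson group $\mathcal{J}_n$ to its preimage $\Sym(n)$ under the standard surjection, so that all orbit computations reduce to familiar statements about set-wise and point-wise stabilizers in $\Sym(n)$. The key observation is that $\mathcal{J}_n$ is by definition the image of $\Sym(n)$ acting on $\binom{[n]}{2}$, so $\Gamma_{\{i,j\}}$ is precisely the image of the set-wise stabilizer $\Sym(\{i,j\}) \times \Sym([n] \setminus \{i,j\}) \leq \Sym(n)$ of $\{i,j\}$, acting on $\binom{[n]}{2}$.

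For part 1 I would argue orbit equality in both directions. The cardinality $|S \cap \{i,j\}|$ is clearly preserved by any element of $\Gamma_{\{i,j\}}$, so the three claimed classes are each a union of orbits. Conversely, I would exhibit explicit stabilizer elements moving any $S$ to any $S'$ within its class: the factor $\Sym([n] \setminus \{i,j\})$ acts as the full symmetric group on pairs disjoint from $\{i,j\}$, handling the case $S \cap \{i,j\} = \emptyset$; for $|S \cap \{i,j\}| = 1$, the same factor moves the element outside $\{i,j\}$ arbitrarily, and the transposition $(i\ j) \in \Sym(\{i,j\})$ swaps the element inside. Combining these gives transitivity on the middle class.

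For part 2 the first key step is to identify $\Gamma_{\{i,j\}} \cap \Gamma_{\{i,k\}}$. Since $k \neq i,j$, we have $\{i,j\} \cap \{i,k\} = \{i\}$, so any $\pi \in \Sym(n)$ stabilizing both sets must fix $i$, and consequently also fix $\{i,j\} \setminus \{i\} = \{j\}$ and $\{i,k\} \setminus \{i\} = \{k\}$. Hence the intersected stabilizer corresponds to the point-wise stabilizer of $\{i,j,k\}$, i.e.\ to $\Sym([n] \setminus \{i,j,k\})$ acting on $\binom{[n]}{2}$. From here, the five claimed orbits follow by the same invariant-and-transitivity pattern: the intersection $S \cap \{i,j,k\}$ (which can be $\{i,j\}$, $\{i,k\}$, $\{j,k\}$, a singleton in $\{i,j,k\}$, or $\emptyset$) is preserved, and within each class transitivity is immediate from the full symmetric action of $\Sym([n] \setminus \{i,j,k\})$ on the remaining index.

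The only step that is not purely formal is the identification in part 2 — showing that fixing both $\{i,j\}$ and $\{i,k\}$ set-wise forces every element of $\{i,j,k\}$ to be fixed point-wise — but this is essentially a one-line set-theoretic argument based on $\{i,j\} \cap \{i,k\} = \{i\}$. Everything else is routine bookkeeping over the cases, and I anticipate no real obstacle beyond writing out the invariants carefully.
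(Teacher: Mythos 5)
Your proof is correct and takes essentially the same approach as the paper: both identify the stabilizer in $\mathcal{J}_n$ with (the image of) the setwise/pointwise stabilizer in $\Sym(n)$ and then argue via a preserved invariant plus transitivity within each class. The paper phrases the stabilizer identification as ``$\pi$ fixes or interchanges $i$ and $j$,'' which is the same content as your explicit lift to $\Sym(\{i,j\})\times\Sym([n]\setminus\{i,j\})$; your argument in part~2 via the intersection $\{i,j\}\cap\{i,k\}=\{i\}$ is a marginally slicker way of reaching the same conclusion that all three of $i,j,k$ are fixed pointwise.
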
	

\begin{proof}
		$\null$
\begin{enumerate}
	\item Let $S \in \binom{[n]}{2}$. If $S = \{i,j\}$, the orbit $S^{\Gamma_{\{i,j\}}}$ consists only of $S$ by definition of the stabilizer. Now suppose that $|S \cap \{i,j\}| = 1$ holds. Without loss of generality, let $S = \{i,r\}$ for some $r \in [n] \setminus \{i,j\}$. Let $\pi \in \Gamma_{\{i,j\}}$. Either $\pi$ fixes $i$ and $j$, in which case we have $S^\pi = \{i, r'\}$ for some $r' \in [n] \setminus \{i,j\}$, or $\pi$ interchanges $i$ and $j$, in which case we have $S^\pi = \{j,r'\}$ for some $r' \in [n] \setminus \{i,j\}$. In both cases, we have $|S^\pi \cap \{i,j\}| = 1$. On the other hand, it is easy to see that for every set $T \in \binom{[n]}{2}$ with $|T \cap \{i,j\}| = 1$, there exists $\pi \in \Gamma_{\{i,j\}}$ with $S^\pi = T$. The description of $S^{\Gamma_{\{i,j\}}}$ in the case $S \cap \{i,j\} = \emptyset$ can be derived analogously. 

\item Note that an element in $\Gamma_{\{i,j\}} \cap \Gamma_{\{i,k\}}$ fixes or interchanges the labels $i$ and $j$, and at the same time fixes or interchanges the labels $i$ and $k$. This is only possible if it fixes all of $i$, $j$ and $k$. The structure of the orbits then follows similarly to the first claim. \qedhere
\end{enumerate}
\end{proof}

We now prove that the algorithm always returns correct symmetries of $F$ and that in case the model graph is Tinhofer, the algorithm is guaranteed to detect that $F$ exhibits a Johnson symmetry on the input set $\sigma$.
\begin{restatable}{theorem}{corrjohnson}
Let $F$ be a SAT formula. 
\begin{enumerate}
	\item If Algorithm~\ref{alg:johnson} returns a list \emph{label} of labels in $[n]$, then for every element in $\mathcal{J}_n$, the induced permutation of $\sigma$ according to label, expanded to $\neg \sigma$, is a symmetry of $F$.
	\item If $F$ exhibits a Johnson symmetry with Johnson group $\mathcal{J}_n$ with $n \geq 8$ on $\sigma$ and $G(F)$ is a Tinhofer graph, then Algorithm~\ref{alg:johnson} detects this structure and returns a corresponding labeling of the literals in $\sigma$ by 2-subsets of $[n]$. 
\end{enumerate}	
%	Given a Tinhofer graph $G$ and a candidate orbit $\sigma$, Algorithm~\ref{alg:johnson} correctly decides whether $G$ exhibits a Johnson symmetry on $\sigma$ with $n \geq 8$, where $|\sigma| = \binom{n}{2}$, and if so, returns a mapping of the vertices of $\sigma$ to 2-subsets of a domain $[n]$. 
\end{restatable}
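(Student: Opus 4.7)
The plan is to handle the two claims separately. For the first (soundness), the argument follows directly from the final verification loop of the algorithm: it checks that $\fix_F(\beta_i)$ is a symmetry of $F$ for every $i \in [n-1]$, where $\beta_i$ is the permutation of $\sigma$ induced via the computed labeling by the adjacent transposition $(i,i+1) \in \Sym(n)$. Since the adjacent transpositions generate $\Sym(n)$, and the Johnson action $\Sym(n) \to \mathcal{J}_n$ is surjective by definition, the $\beta_i$ generate the full group of permutations of $\sigma$ induced by $\mathcal{J}_n$. A short computation shows that $\fix_F$ is a group homomorphism on permutations supported in $\sigma$, so closure of $\Aut(F)$ under composition hands us the desired conclusion for every element of $\mathcal{J}_n$.

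For the second claim, I would leverage the Tinhofer assumption, which by Lemma~\ref{lem:irreq} guarantees that after every individualization sequence considered by the algorithm, the resulting color classes on $\sigma$ coincide with the orbits of the corresponding stabilizer in $\Aut(F)$. Combined with the orbit descriptions in Lemma~\ref{lemma:johnsonaux}, this gives the needed structural control. Part~1 of that lemma ensures that individualizing $v \in \sigma$ with true labels $b(v) = \{i,j\}$ splits $\sigma$ into exactly three fragments: $\{v\}$, the ``adjacent'' set $\{u : |b(u) \cap \{i,j\}| = 1\}$ of size $2(n-2)$, and the ``disjoint'' set of size $\binom{n-2}{2}$. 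The hypothesis $n \geq 8$ is precisely what is needed so that $2(n-2) < \binom{n-2}{2}$, certifying that the smaller non-singleton fragment is the adjacent one, and hence that $\text{ad}[v]$ is correctly identified. Part~2 of the lemma then handles the double individualization: after individualizing both $v$ and $w \in \text{ad}[v]$, say with $b(w) = \{j,k\}$, the unique additional singleton fragment has true labels $\{i,k\}$, so $y$ is correctly determined. A short case analysis over the possible ``second'' labels of vertices in the pairwise intersections $\text{ad}[v] \cap \text{ad}[w]$, $\text{ad}[v] \cap \text{ad}[y]$, $\text{ad}[w] \cap \text{ad}[y]$, then shows that the sets $E_i, E_j, E_k$ produced in the iteration equal $\{u : i \in b(u)\}$, $\{u : j \in b(u)\}$, and $\{u : k \in b(u)\}$ respectively.

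The main obstacle will be to prove that the \emph{iterative} label assignment across outer iterations stays consistent and eventually terminates with a complete labeling. The plan is to maintain an invariant asserting that every algorithm label $m \in \mathbb{N}$ assigned so far is in bijection with a unique true label $t(m) \in [n]$, and that $m \in \text{label}[u]$ if and only if $t(m) \in b(u)$. Maintenance hinges on the fact that, by the analysis above, every set $E$ on which a label is appended equals $\{u : r \in b(u)\}$ for a uniquely determined $r$, and the emptiness check on the intersection of existing labels correctly distinguishes whether $r$ has already been processed: for $n \geq 4$ the vertices in $E$ differ in their ``other'' true label, so they can share an algorithm label only if that label already represents $r$. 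Termination then follows by observing that whenever $v$ is picked with $|\text{label}[v]| \leq 1$, at least one of its two true labels in $b(v)$ has no assigned algorithm label; the corresponding $E$-set contains $v$ and has empty label intersection, so a new algorithm label is assigned and appended to $\text{label}[v]$. Hence $|\text{label}[v]|$ strictly increases each time $v$ is processed, which simultaneously justifies the per-iteration check ``new labels were added'' and guarantees termination with every vertex carrying exactly two labels, matching the bijection $b$ up to a relabeling of $[n]$.
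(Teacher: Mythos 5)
Your proposal is correct and follows essentially the same route as the paper: it uses the final verification loop plus the generation of $\Sym(n)$ by adjacent transpositions for soundness, and for completeness it combines the Tinhofer hypothesis with Lemma~\ref{lemma:johnsonaux} to pin down the fragments and the sets $E_i,E_j,E_k$, then maintains an invariant that algorithm labels are in bijection with true labels. The only visible differences are matters of exposition (you make the homomorphism property of $\fix_F$ and the inequality $2(n-2)<\binom{n-2}{2}$ explicit, and phrase termination via strict growth of $|\text{label}[v]|$ rather than an at-most-$n$-iterations bound), which do not change the substance of the argument.
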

\begin{proof}
	The last part of Algorithm~\ref{alg:johnson} ensures that the Johnson action $j_\pi$ induced by a transposition $\pi \coloneqq (i,i+1) \in \mathcal{J}_n$ by permuting the elements in $\sigma$ according to their labels in \emph{label} is a symmetry of $F$ when expanded to $\neg \sigma$. By suitably composing these transpositions, it follows that every element of $\mathcal{J}_n$ induces a symmetry of $F$ in this way. 
	
	Now suppose that $F$ exhibits a Johnson symmetry with Johnson group $\mathcal{J}_n$ with $n \geq 8$ (i.e., $|\sigma| \geq 28$). Furthermore, assume that $G(F)$ is Tinhofer. In particular, there is a bijection $b \colon \sigma \to \binom{[n]}{2}$ (see Figure~\ref{fig:johnsondetail}). We claim that when the algorithm terminates, there is a permutation $\tau \in \Sym(n)$ of the label set $[n]$ such that we have $\text{label}[v] = \{\tau(i),\tau(j)\}$ if $b(v)= \{i,j\}$. Note that the bijection $b$ itself is determined only up to permutation of the labels. Again, for the sake of clarity, we refer to the elements of~$[n]$ as \emph{labels} and reserve the term \emph{vertices} for the vertices of the graph $G(F)$.  
	
	The individualization of a vertex $v$ with $b(v) = \{i,j\}$ (Line \ref{alg:johnson:firstindividualization}) leads to a color partition with three fragments since $G(F)$ is Tinhofer (see Lemma~\ref{lemma:johnsonaux} and Figure~\ref{fig:johnson:ind}). 
	The smaller non-singleton fragment is $\sigma_v = \{u \in \sigma \colon |b(u) \cap \{i,j\}| = 1\}$. For this, note that $|\sigma_v| = 2(n-2)$ holds and that we have $n \geq 8$ by assumption. 
	The list $\text{ad}[v]$ (Line~\ref{alg:johnson:adv}) then consists of all vertices $u \in \sigma$ with $b(u) = \{i,r\}$ or $b(u) = \{j,r\}$ with $r \in [n] \setminus \{i,j\}$.
	
	Now let $w \in \text{ad}[v]$. Up to this point, the labels $i$ and $j$ are interchangeable, so we may assume $b(w) = \{j,k\}$ for some $k \in [n] \setminus \{i,j\}$. We repeat the above procedure with $w$ in place of $v$. In particular, $\text{ad}[w]$ (Line~\ref{alg:johnson:adw}) contains all vertices $u \in \sigma$ with $b(u) = \{j,r\}$ or $b(u) = \{k,r\}$ for $r \in [n] \setminus \{j,k\}$. 
	
	Finally we individualize both $v$ and $w$ to obtain the coloring $\pi_{v,w}$. The fragments are given by Lemma~\ref{lemma:johnsonaux}. In particular, we obtain $b(y) = \{i,k\}$. Apart from $y$, the intersection $\text{ad}[v] \cap \text{ad}[w]$ contains all vertices $u \in \sigma$ with $b(u) = \{j,r\}$ for $r \in [n] \setminus \{i,j,k\}$, and we add them to $E_j$ (Line~\ref{alg:johnson:ej}). Similarly, we construct the sets $E_i$ and $E_k$ (Lines~\ref{alg:johnson:ei} and~\ref{alg:johnson:ek}).
	%. By Lemma~\ref{lemma:johnsonaux}, $\pi_{v,w}$ contains three fragments of size $n-3$. Note that $\sigma_k$ consists of the elements $\{k,r\}$ with $r \notin \{i,j,k\}$ and that $\sigma_i$ consists of the elements $\{i,r\}$ with $r \notin \{i,j,k\}$. 
	
	From this explicit description, it is clear that $u \in \sigma$ is added to $E_i$ precisely if $i \in b(u)$ (similarly for $E_j$ and $E_k$). In particular, for distinct vertices $u_1, u_2 \in E_i$, we have $b(u) \cap b(v) = \{i\}$. Thus if the lists $\text{label}[u]$ for $u \in E_i$ have a common entry, the label $i$ has been considered before (recall that $|E_i| > 1$ holds). Otherwise, we add the current vertex number \emph{vnr} to $\text{label}[u]$ for all $u \in E_i$ (Line~\ref{alg:johnson:marking}) and set $\tau(i) = \text{vnr}$. This way, $\text{label}[u]$ remains duplicate-free and only ever contains labels $\tau(l)$ for $l  \in b(u)$. In particular, we always maintain the property $|\text{label}[u]| \leq 2$. In each iteration of the while loop, one of the labels $i$ and $j$ was not considered before (due to $|\text{label}[v]|\leq 1$). In particular, the loop is executed at most $n$ times. When it stops, we have $|\text{label}[v]| = 2$ for all vertices $v$.
	%
	%By the first claim, if the algorithm returns a list \emph{mark} of labels, then for every $\pi \in \Sym(n)$, the corresponding Johnson action $j_\pi$ on $\sigma$ is indeed a symmetry of $F$. Let $J = \{j_\pi \colon \pi \in \Sym(n)\}$ be the group of these symmetries. Let $\alpha \in \Aut(F)$. We claim that the action of~$\alpha$ on $\sigma$ coincides with that of an element in $J$. 
	%By multiplying $\alpha$ with a suitable element of $J$, we may assume that $\alpha$ fixes a vertex $v \in \sigma$. Since the point stabilizer $\Aut(F)_v$ has three orbits on $\sigma$, $\alpha$ permutes the set $\text{ad}[v]$ of vertices adjacent to $v$. 
	%\markus{pointwise stabilizers notation}
	%By multiplying with a suitable element of $J$, we may assume that $\alpha$ fixes $\text{ad}[v]$ pointwise. 
	%By assumption, applying $\IR$ to the vertices in $\text{ad}[v] \cup \{v\}$ leads to a color partition, where every literal of the disjoint direct factor is a singleton. 
	%Hence $\alpha$ acts as the identity on $\sigma$. This shows that $\Aut(F)|_\sigma = J|_\sigma$ holds, that is, $F$ exhibits a Johnson symmetry $\mathcal{J}_n$ on~$\sigma$.
	\end{proof}
	
\textbf{Johnson Action on Row Symmetry.}
Quite commonly, SAT instances which search for a graph, will search for a graph with a certain \emph{vertex property}.
For example, when asking for a $k$-colorable graph, there will be (interchangeable) colors attached to each vertex of the graph.
In order to detect a corresponding symmetry structure, we want to detect blocks which correspond to the labels in the Johnson domain.
The detection works by stabilizing vertices in other orbits, and checking whether they split apart the Johnson orbit precisely into the vertices marked with a particular label, and a remainder.  
If so, these blocks are collected and considered in our overall Johnson action.
Finally, we run row symmetry detection on the collected blocks.

\section{Implementation} \label{sec:implementation}
We now give an overview of our new symmetry breaking tool \textsc{satsuma}.
The input of our algorithm is a CNF formula $F$. 
The output is a symmetry breaking constraint for $F$.
We first discuss the breaking constraints produced for a given detected structure. 

\textbf{Breaking Constraints.} 
We produce lex-leader constraints for each detected structure: we use precisely the automorphisms constructed in Algorithm~\ref{alg:row}, Algorithm~\ref{alg:rowcolumn}, and Algorithm~\ref{alg:johnson}.
Before we can produce lex-leader constraints, we must however fix an ordering on the variables.
The ordering used for matrix models simply orders the matrix row-by-row.
For Johnson groups, we begin with the vertices of the first label (see Algorithm~\ref{alg:johnson}), then the remaining vertices of the second label, and so forth.

\textbf{High-level Algorithm.} The high-level algorithm proceeds as follows:\\
\textit{(Step 1.)} Construct a model graph from the given CNF formula.\\
\textit{(Step 2.)} Run the algorithms described in the previous section in the following order: Johnson groups (Algorithm~\ref{alg:johnson}), row-column symmetry (Algorithm~\ref{alg:rowcolumn}), row interchangeability (Algorithm~\ref{alg:row}). Whenever a structure is found, all orbits covered by the structure are marked. The subsequent analysis only considers \emph{unmarked} orbits. For each structure, symmetry breaking constraints are constructed as described above.
Lastly, we maintain a vertex coloring of the model graph, which we call the \emph{remainder coloring}: this coloring restricts the symmetries of the model graph to symmetries not yet covered by detected structures.\\
\textit{(Step 3.)} Run symmetry detection on the graph colored with the remainder coloring. Then, the \emph{binary clause} heuristic of \textsc{BreakID} is applied for all variables not yet ordered by already produced lex-leader constraints: a stabilizer chain of the automorphism group is approximated, and for each stabilized variable $x$ a short lex-leader constraint for each other literal $y$ of its orbit is produced, i.e., essentially the binary constraint $x \leq y$ (see \cite{DBLP:conf/sat/Devriendt0BD16} for a detailed description).
	Lastly, a lex-leader constraint for each generator is produced. 

\textbf{Implementation.} 
%Our implementation of the above algorithms is called \textsc{satsuma}. 
The tool is written in \textsf{C++}, and is freely available as open source software \cite{satsuma:github}. %´\sofia{in my logic, this sentence should come earlier, maybe at the beginning of Chapter 4}
The tool \textsc{dejavu} \cite{dejavu:webpage, DBLP:conf/esa/AndersS21, DBLP:conf/wea/AndersSS23} is used for providing general-purpose symmetry detection, the individualization-refinement framework, and data structures for symmetries.
Significant parts of the implementation, in particular the generation of lex-leader constraints and binary clauses, are reverse-engineered from \textsc{BreakID}.
Our reimplementation of these routines differs in two crucial aspects from the original one:
first, \textsc{BreakID} uses the symmetry detection tool \textsc{saucy} \cite{DBLP:conf/dac/DargaLSM04} instead of \textsc{dejavu}.
Second, we use different data structures and algorithms for the handling of symmetry.

\section{Benchmarks} \label{sec:benchmarks}
We compare the state-of-the-art static symmetry breaking tool \textsc{BreakID} (version 2.6) to \textsc{satsuma}. 
\begin{figure}
	\centering
	\small
	\begin{tabular}{|ll|cc|ccc|ccc|}\hline
        \multicolumn{2}{|c|}{family}& \multicolumn{2}{c|}{\textsc{CMS}} & \multicolumn{3}{c|}{\textsc{BreakID}+\textsc{CMS}} & \multicolumn{3}{c|}{\textsc{satsuma}+\textsc{CMS}}\\ 
		name & size & solved & avg & prep & solved & avg & prep & solved & avg \\ \hline
		\textsf{channel} &10&2&484.99&4.727&\textbf{10}&\textbf{0.032}&\textbf{0.404}&\textbf{10}&0.033
		\\
		\textsf{cliquecolor} & 20 & 2&574.734&0.129&13&228.998&\textbf{0.058}&\textbf{20}&\textbf{0.845}
		\\
		\textsf{coloring} & 55 & 21&377.338&42.12*&26&317.32&\textbf{1.071}&\textbf{27}&\textbf{307.632}\\
		\textsf{fpga} & 10 & 6&321.596&0.035&\textbf{10}&0.01&\textbf{0.01}&\textbf{10}&\textbf{0.008}\\
		\textsf{md5} & 11 & 5&358.616&0.635&5&359.382&\textbf{0.548}&\textbf{6}&\textbf{349.171} \\
		\textsf{php} & 10 & 3&423.266&6.337&\textbf{10}&0.043&\textbf{0.128}&\textbf{10}&\textbf{0.036}\\
		\textsf{ramsey} & 7 & 2&428.613&1.681&3&343.086&\textbf{0.394}&\textbf{5}&\textbf{235.27} \\
		\textsf{urquhart} & 6 & \textbf{6}&0.768&0.14&\textbf{6}&\textbf{0.008}&\textbf{0.032}&\textbf{6}&0.066\\\hline
	\end{tabular}
	\caption{Benchmarks comparing \textsc{BreakID} to \textsc{satsuma}, using the solver \textsc{CryptoMiniSAT} (\textsc{CMS}). The timeout is $600$ seconds, all times are given in seconds. The columns ``prep'' denote the average time used for symmetry breaking. Columns ``solved'' refer to the number of solved instances by \textsc{CMS}, and ``avg'' is the average time spent by \textsc{CMS}. *\textsc{BreakID} could not compute the symmetry breaking constraints of two \textsf{coloring} instances within the timeout. We declared these as a timeout for the SAT benchmarks (but the other configurations also timed out on these instances).} \label{fig:benchmarks:main}
\end{figure}
\begin{figure}
	\centering
	\begin{tabular}{|ll|cc|ccc|ccc|}\hline
        \multicolumn{2}{|c|}{family}& \multicolumn{2}{c|}{\textsc{CaD}} & \multicolumn{3}{c|}{\textsc{BreakID}+\textsc{CaD}} & \multicolumn{3}{c|}{\textsc{satsuma}+\textsc{CaD}}\\ 
		name & size & solved & avg & prep & solved & avg & prep & solved & avg \\ \hline
		\textsf{channel} &10&2&494.226&4.727&\textbf{10}&\textbf{0.077}&\textbf{0.404}&\textbf{10}&\textbf{0.077}
		\\
		\textsf{cliquecolor} & 20 & 9&442.373&0.129&13&216.999&\textbf{0.058}&\textbf{20}&\textbf{0.2}
		\\
		\textsf{coloring} & 55 & 20&393.864&42.12*&26&316.779&\textbf{1.071}&\textbf{28}&\textbf{301.783}
		\\
		\textsf{fpga} & 10 & 5&391.29&0.035&\textbf{10}&\textbf{0.008}&\textbf{0.01}&\textbf{10}&0.025\\
		\textsf{md5} & 11 & \textbf{6}&339.378&0.635&\textbf{6}&343.324&\textbf{0.548}&\textbf{6}&\textbf{324.716}
		\\
		\textsf{php} & 10 & 3&422.976&6.337&\textbf{10}&\textbf{0.085}&\textbf{0.128}&\textbf{10}&0.1\\
		\textsf{ramsey} & 7 & 2&428.583&1.681&3&342.908&\textbf{0.394}&\textbf{5}&\textbf{192.299} \\
		\textsf{urquhart} & 6 & 2&449.622&0.14&\textbf{6}&\textbf{0.005}&\textbf{0.032}&\textbf{6}&0.052\\\hline
	\end{tabular}
	\caption{Benchmarks comparing \textsc{BreakID} to \textsc{satsuma}. The SAT solver used is \textsc{CaDiCaL} (\textsc{CaD}). The timeout used is $600$ seconds. The columns ``prep'' refer to the time in seconds used to compute the symmetry breaking constraint. Columns ``solved'' refer to the number of solved instances by \textsc{CaD}, and ``avg'' is the average time used by \textsc{CaD} (\emph{excluding} the time used for symmetry breaking). *\textsc{BreakID} could not compute the symmetry breaking constraints of two \textsf{coloring} instances within the timeout.} \label{fig:benchmarks:cadical}
\end{figure}

As SAT solvers, we use \textsc{CryptoMiniSAT} \cite{DBLP:conf/sat/SoosNC09} and \textsc{CaDiCaL} \cite{BiereFazekasFleuryHeisinger-SAT-Competition-2020-solvers}.
%The benchmarks using \textsc{CaDiCaL} are in Appendix~\ref{sec:cadical}. 
The benchmarks using \textsc{CaDiCaL} largely concur with the \textsc{CryptoMiniSAT} benchmarks, and our descriptions will focus on the results using \textsc{CryptoMiniSAT}.
The timeout for all benchmarks is $600$ seconds. 
We separately measure the time spent on symmetry breaking itself, and SAT solving. 
All benchmarks ran sequentially on an Intel Core i7 9700K with 64GB of RAM on Ubuntu 20.04.

\textbf{Benchmark Instances.}
We run benchmarks on a variety of well-established instance families exhibiting symmetry (see Figure~\ref{fig:benchmarks:main}).
The sets \textsf{coloring}, \textsf{urquhart}, \textsf{fpga}, \textsf{md5}, and \textsf{channel} are part of the distribution of \textsc{BreakID} \cite{DBLP:conf/sat/Devriendt0BD16}.
We generate pigeonhole principle (\textsf{php}) instances, Ramsey instances, and clique coloring instances using the tool \textsc{cnfgen} \cite{DBLP:conf/sat/LauriaENV17}.
The set of parameters for clique coloring is similar to \cite{DBLP:journals/jsc/JunttilaKKK20}, but we added larger instances.
All instances are unsatisfiable.
Individual instances and results are listed in Appendix~\ref{sec:benchmarkdetail}.

Regarding the detected symmetry structures of these instances, we detect Johnson symmetry on the \textsf{ramsey} and \textsf{cliquecolor} families.
On \textsf{php}, \textsf{channel}, and \textsf{fpga}, \textsc{satsuma} detects row-column symmetry, and \textsc{BreakID} corresponding row interchangeability (see also \cite{DBLP:journals/constraints/Sabharwal09,DBLP:conf/sat/Devriendt0BD16}).
The \textsf{coloring} instances exhibit a variety of different symmetries, but in particular also row symmetry \cite{DBLP:conf/sat/Devriendt0BD16}.
In \textsf{urquhart} and \textsf{md5}, no structure is detected by either of the tools.  

Regarding our choice of benchmark instances, we stress that our main goal is to observe whether detecting richer structures can improve performance compared to existing approaches.

\textbf{SAT Benchmarks.}
An overview of the results can be found in Figure~\ref{fig:benchmarks:main} (for \textsc{CaDiCaL}, see Figure~\ref{fig:benchmarks:cadical}).
Considering the results, we observe that \textsc{satsuma}  solves more instances, and solving times are considerably lower on average on the \textsf{cliquecolor} and \textsf{ramsey} instances. 
We recall that these instance families exhibit Johnson symmetry.
On all sets with row and row-column symmetry, that is \textsf{channel}, \textsf{coloring}, \textsf{fpga}, and \textsf{php}, we observe that solved instances and average solving times are comparable.
On \textsf{coloring}, we observe that \textsc{satsuma} solves one more instance than \textsc{BreakID} (and two more using \textsc{CaDiCaL}).
For \textsf{urquhart}, both \textsc{satsuma} and \textsc{BreakID} rely on the binary clause strategy. 
The results indicate that \textsc{BreakID} is more effective in breaking symmetry, which is however outweighed by the faster runtime of \textsc{satsuma}.
The \textsf{md5} instances only contain a single non-trivial symmetry.
Here, \textsc{satsuma} produces more breaking clauses, and we observe a consistent albeit marginal speedup.
It should be mentioned that it does however seem plausible that the observed speed-up may be due to shuffling of literals in clauses, or other factors.

In particular, we point out that \textsc{satsuma} compares favorably on instance families which exhibit Johnson symmetry.
We believe this to be due to our detection of Johnson symmetry and the subsequent generation of more favorable constraints. 
Crucially, on all successfully solved instances of \textsf{cliquecolor} and \textsf{ramsey}, the \emph{remainder contains no symmetry}: all symmetries are detected and in turn broken solely using the algorithms of this paper, and no general-purpose symmetry detection and breaking is applied.

We observe that the average time spent computing the symmetry breaking constraints is lower on all families for \textsc{satsuma}. 
A more in-depth analysis follows below.

\begin{figure}
	\centering
	\scalebox{0.5}{
		% This file was created with tikzplotlib v0.9.17.
\begin{tikzpicture}

\definecolor{color0}{rgb}{0.12156862745098,0.466666666666667,0.705882352941177}
\definecolor{color1}{rgb}{1,0.498039215686275,0.0549019607843137}

\begin{axis}[
legend cell align={left},
legend style={
  fill opacity=0.8,
  draw opacity=1,
  text opacity=1,
  at={(0.03,0.97)},
  anchor=north west,
  draw=white!80!black
},
log basis y={10},
tick align=outside,
tick pos=left,
title={{overhead\_php}},
x grid style={white!69.0196078431373!black},
xlabel={pigeons},
xmin=3, xmax=157,
xtick style={color=black},
y grid style={white!69.0196078431373!black},
ylabel={computation time (s)},
ymin=0.00274792065127367, ymax=757.144430379844,
ymode=log,
ytick style={color=black}
]
\addplot [semithick, color0, opacity=0.5, mark=*, mark size=1.5, mark options={solid}]
table {%
10 0.004856069
20 0.022671271
30 0.042603877
40 0.045966922
50 0.083128178
60 0.143757423
70 0.257087781
80 0.354929479
90 0.618676624
100 0.912411576
110 1.288499099
120 1.814128264
130 2.400829717
140 3.015157426
150 4.316872369
};
\addlegendentry{\textsc{satsuma}}
\addplot [semithick, color1, opacity=0.5, mark=+, mark size=1.5, mark options={solid}]
table {%
10 0.007406826
20 0.051406048
30 0.223962074
40 0.683375401
50 1.777963855
60 3.70048651
70 8.105637691
80 15.352032598
90 30.734331358
100 49.828880337
110 87.392463599
120 139.238977594
130 200.514627469
140 298.248344213
150 428.447951674
};
\addlegendentry{\textsc{BreakID}}
\end{axis}

\end{tikzpicture}
	}
	\scalebox{0.5}{
		% This file was created with tikzplotlib v0.9.17.
\begin{tikzpicture}

\definecolor{color0}{rgb}{0.12156862745098,0.466666666666667,0.705882352941177}
\definecolor{color1}{rgb}{1,0.498039215686275,0.0549019607843137}

\begin{axis}[
legend cell align={left},
legend style={
  fill opacity=0.8,
  draw opacity=1,
  text opacity=1,
  at={(0.03,0.97)},
  anchor=north west,
  draw=white!80!black
},
log basis y={10},
tick align=outside,
tick pos=left,
title={{overhead\_clqcolor}},
x grid style={white!69.0196078431373!black},
xlabel={vertices},
xmin=-4.5, xmax=314.5,
xtick style={color=black},
y grid style={white!69.0196078431373!black},
ylabel={computation time (s)},
ymin=0.00296426880265207, ymax=378.050775006307,
ymode=log,
ytick style={color=black}
]
\addplot [semithick, color0, opacity=0.5, mark=*, mark size=1.5, mark options={solid}]
table {%
10 0.005081214
20 0.010292257
30 0.021225147
40 0.041169493
50 0.066609579
60 0.102679797
70 0.151359152
80 0.212453379
90 0.288810498
100 0.374300582
110 0.486403331
120 0.633869386
130 0.78520752
140 0.987682348
150 1.203782925
160 1.447614248
170 1.744458756
180 2.075359983
190 2.45178132
200 2.871608497
210 3.300728302
220 3.829861562
230 4.382301722
240 5.091619849
250 5.84234697
260 6.775428854
270 7.698482535
280 9.648613899
290 10.490529329
300 11.666321415
};
\addlegendentry{\textsc{satsuma}}
\addplot [semithick, color1, opacity=0.5, mark=+, mark size=1.5, mark options={solid}]
table {%
10 0.005058151
20 0.017078724
30 0.047900726
40 0.112127237
50 0.221330277
60 0.386309518
70 0.651473076
80 1.034297458
90 1.638270294
100 2.591032984
110 3.271883564
120 4.650230108
130 6.53855365
140 8.860807833
150 11.278455594
160 15.619426945
170 18.883515665
180 23.648150765
190 31.651993442
200 36.264616285
210 48.537539875
220 63.069699111
230 72.289665719
240 85.394221814
250 105.368703415
260 119.685769765
270 140.44407193
280 156.178097723
290 182.124929841
300 221.552128074
};
\addlegendentry{\textsc{BreakID}}
\end{axis}

\end{tikzpicture}
	}
	\scalebox{0.5}{
		% This file was created with tikzplotlib v0.9.17.
\begin{tikzpicture}

\definecolor{color0}{rgb}{0.12156862745098,0.466666666666667,0.705882352941177}
\definecolor{color1}{rgb}{1,0.498039215686275,0.0549019607843137}

\begin{axis}[
legend cell align={left},
legend style={
  fill opacity=0.8,
  draw opacity=1,
  text opacity=1,
  at={(0.97,0.03)},
  anchor=south east,
  draw=white!80!black
},
log basis y={10},
tick align=outside,
tick pos=left,
title={{overhead\_urquhart}},
x grid style={white!69.0196078431373!black},
xlabel={vertices},
xmin=-7, xmax=367,
xtick style={color=black},
y grid style={white!69.0196078431373!black},
ylabel={computation time (s)},
ymin=0.0018988610345399, ymax=1096.57850791832,
ymode=log,
ytick style={color=black}
]
\draw[line width=4pt, red!30] (axis cs:-20,600) -- (axis cs:600,600);
\addplot [semithick, color0, opacity=0.5, mark=*, mark size=1.5, mark options={solid}]
table {%
10 0.004472619
20 0.006225698
30 0.010097407
40 0.011576133
50 0.014792831
60 0.020483655
70 0.027513418
80 0.030854421
90 0.040749406
100 0.044547332
110 0.053750191
120 0.06854882
130 0.075363302
140 0.083704754
150 0.094183895
160 0.119961569
170 0.117929302
180 0.135461816
190 0.150290415
200 0.184899166
210 0.208687465
220 0.228152047
230 0.250908479
240 0.229632558
250 0.246717593
260 0.26752934
270 0.231907607
280 0.286333986
290 0.364589536
300 0.179264148
310 0.234187186
320 0.41442788
330 0.233537071
340 0.235195824
350 0.590134491
};
\addlegendentry{\textsc{satsuma}}
\addplot [semithick, color1, opacity=0.5, mark=+, mark size=1.5, mark options={solid}]
table {%
10 0.003470417
20 0.006855121
30 0.020279546
40 0.019763563
50 0.03368733
60 0.058799986
70 0.075584059
80 0.106549415
90 0.147159072
100 0.191472492
110 0.264087796
120 0.314460819
130 0.409854694
140 0.508914081
150 0.588603825
160 9.126428594
170 0.981623409
180 1.041428702
190 536.325936251
200 53.014165099
210 523.51514111
220 600
230 215.502935723
240 585.961834761
250 2.684199217
260 3.099926938
270 3.388681156
280 31.956253392
290 5.218896063
300 4.627351169
310 600
320 66.942211632
330 600
340 600
350 600
};
\addlegendentry{\textsc{BreakID}}
\end{axis}

\end{tikzpicture}
	}
	\caption{Benchmarks comparing the computational overhead of \textsc{BreakID} to \textsc{satsuma}. The shown computation time is the time spent computing symmetry breaking constraints for an instance using the respective tool. The red bar indicates the timeout of $600$ seconds.} \label{fig:benchmarks:overhead}
\end{figure}
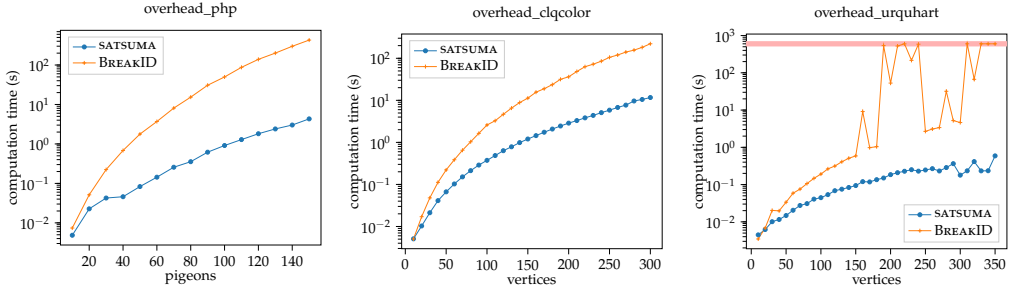

\textbf{Computational Overhead.} 
We conduct further benchmarks to gauge the computational overhead incurred by \textsc{BreakID} and \textsc{satsuma}.
We test three different benchmark families: \textsf{php}, \textsf{cliquecolor}, and \textsf{urquhart} (generated using \textsc{cnfgen}).
For \textsf{php}, we increase the number of pigeons from $10$ to $150$ (with $n-1$ holes, respectively).
For \textsf{cliquecolor}, we increase the number of vertices of the prospective graph from $10$ to $300$ (the size of the clique is $3$ and number of colors $2$).
In \textsf{urquhart}, we use random $5$-regular graphs, increasing the number of vertices from $10$ to $350$.
We chose these instance families such that they cover the different symmetry detection routines in \textsc{satsuma}:
the family \textsf{php} essentially measures the runtime of our row-column routine, \textsf{cliquecolor} that of the Johnson routine, and \textsf{urquhart} uses general purpose symmetry detection, followed by the binary clause strategy.

Figure~\ref{fig:benchmarks:overhead} summarizes the results. 
In all instance families, the data suggest that \textsc{satsuma} asymptotically scales better than \textsc{BreakID}.
These results match our observations regarding overhead from the first part of the benchmarks (see Figure~\Ref{fig:benchmarks:main}).  

We believe there are multiple reasons why \textsc{satsuma} runs faster than \textsc{BreakID}.
First, our new algorithms of Section~\ref{sec:detectionalgorithms} verify symmetries on the CNF formula instead of the model graph.
This is advantageous because symmetries of the CNF only explicitly map literals, whereas symmetries of the model graph also explicitly map clauses.
Second, most routines in our implementation run proportional in the size of the \emph{support} of symmetries, as opposed to the number of literals of $F$. 
Third, for general-purpose symmetry detection, \textsc{dejavu} seems to be more efficient in computing automorphism groups of SAT instances than \textsc{saucy} \cite{dejavu:webpage}.

We mention that in the \textsf{urquhart} instances, the outliers with high running time seem to be due to \textsc{saucy} taking a long time to compute symmetries for \textsc{BreakID}.
On the other hand, in these cases, we observe that the symmetries as returned by \textsc{dejavu} are less suitable for the binary clause heuristic, leading to fewer produced clauses.
This could however be easily alleviated by a strengthening of the heuristic (e.g., by sometimes applying the Schreier-Sims algorithm for stabilizers as already pointed out in \cite{DBLP:conf/sat/Devriendt0BD16}).  

% \subsection{Impact of Structure Detection}

% \begin{figure}
% 	\centering
% 	\begin{tabular}{|ll|ll|ll|ll|}\hline
%         \multicolumn{2}{|c|}{family}& \multicolumn{2}{c|}{\textsc{satsuma-no-struct}} & \multicolumn{2}{c|}{\textsc{satsuma-only-struct}}  & \multicolumn{2}{c|}{\textsc{satsuma}}\\ 
% 		name & size & solved & avg & solved & avg & solved & avg \\
% 		channel & 10 & & & &&& \\
% 		cliquecolor & 14 & & & &&& \\
% 		color & 42 & & & & &&\\
% 		fpga & 10 & & & && &\\
% 		md5 & 11 & & & && &\\
% 		php & 9 & & & && &\\
% 		ramsey & & & &  && &\\
% 		urquhart & 6 & & & && &\\\hline
% 	\end{tabular}
% 	\caption{Benchmarks comparing \textsc{BreakID} to \textsc{satsuma}. The SAT solver used is \textsc{CryptoMiniSAT}. The timeout for the symmetry breaking tools is $360$ seconds, and for solving $600$ seconds.}
% \end{figure}

\section{Conclusions and Future Work} \label{sec:conclusions}
We described a new structure-based approach to symmetry breaking, and demonstrated the effectiveness of our implementation \textsc{satsuma}.
%As for future work, our goal is to further develop \textsc{satsuma}.
There seem to be many promising directions in which the present work could be expanded:
\begin{itemize}
	\item Detect more group structures: in particular, a more generic approach to detect aggregates of groups would be of great interest. Another interesting case might be the symmetries of the family \textsf{urquhart}, which are isomorphic to $C_2^k$ and have been studied previously \cite{DBLP:journals/amai/LuksR04}.
	\item Consider other breaking approaches for certain group structures. So far, we used the knowledge of group structures to pick out automorphisms, for which off-the-shelf lex-leader constraints are generated.
	Since optimal handling of row-column symmetry and Johnson symmetry seems infeasible with lex-leader constraints \cite{DBLP:journals/amai/LuksR04}, other breaking constraints could lead to better results. Moreover, Johnson symmetry allows the use of symmetry reduction developed specifically for graph generation \cite{DBLP:conf/cp/CodishGIS16, DBLP:journals/constraints/CodishMPS19, DBLP:conf/cp/KirchwegerS21}. 
	\item Improved techniques for handling of the ``remainder''. As already pointed out in \cite{DBLP:conf/sat/Devriendt0BD16}, one potential direction would be to apply the random Schreier-Sims algorithm \cite{seress_2003} to produce more small symmetry breaking clauses.  
	\item An enticing feature is proof-logging, as was recently introduced to \textsc{BreakID} \cite{DBLP:journals/jair/BogaertsGMN23}.
	\item The new detection algorithms could be applied in other domains as well: for example, seeing as row interchangeability is successfully used in MIP, it seems only natural that MIP instances may also contain richer structures.
	\item Sometimes symmetries are not present in a compiled CNF of a given problem (as, e.g., analyzed in \cite{DBLP:journals/jsc/JunttilaKKK20}). A possible remedy is to allow the user to provide an auxiliary graph that models the original symmetry (see \cite{DBLP:journals/jsc/JunttilaKKK20}), and the methods proposed in this paper should generalize to this setting. 
\end{itemize}

\section*{Funding}
The research leading to these results has received funding from the European Research Council (ERC) under the European Union's Horizon 2020 research and innovation programme
(EngageS: grant agreement No. 820148). Sofia Brenner additionally received funding from the German Research Foundation DFG
(SFB-TRR 195 “Symbolic Tools in Mathematics and their Application”).

\bibliography{main}
\bibliographystyle{plain}

\newpage
\appendix
{\Huge Appendix}

\section{Benchmark Results for Individual Instances} \label{sec:benchmarkdetail}
The results for individual instances can be found in Figure~\ref{fig:ind:channel}, Figure~\ref{fig:ind:clqcolor}, Figure~\ref{fig:ind:coloring}, Figure~\ref{fig:ind:fpga}, Figure~\ref{fig:ind:md5}, Figure~\ref{fig:ind:php}, and Figure~\ref{fig:ind:ramsey}. The figures contain average solving times for both \textsc{CryptoMiniSAT} (\textsc{CMS}) and and \textsc{CaDiCaL} (\textsc{CaD}).

\begin{figure}[p]
	\centering
	\small
	\begin{tabular}{|l|c|c|ccc|ccc|}\hline
		 & \textsc{CMS} & \textsc{CaD} & \multicolumn{3}{c|}{\textsc{BreakID}} & \multicolumn{3}{c|}{\textsc{satsuma}}\\
		 instance & solve & solve & prep & \textsc{CMS}& \textsc{CaD} & prep & \textsc{CMS}& \textsc{CaD}\\\hline  
		 chnl-005x006.shuffled&0.007&0.006&0.004&0.007&0.008&0.004&0.007&0.004\\
chnl-010x011.shuffled&49.896&142.257&0.019&0.009&0.009&0.008&0.008&0.006\\
chnl-015x017.shuffled&600&600&0.075&0.01&0.01&0.017&0.01&0.076\\
chnl-020x021.shuffled&600&600&0.186&0.014&0.019&0.034&0.014&0.014\\
chnl-025x050.shuffled&600&600&2.257&0.027&0.064&0.247&0.027&0.035\\
chnl-030x031.shuffled&600&600&0.896&0.022&0.045&0.116&0.023&0.028\\
chnl-040x041.shuffled&600&600&3.01&0.035&0.159&0.312&0.035&0.05\\
chnl-045x050.shuffled&600&600&6.135&0.05&0.085&0.575&0.048&0.202\\
chnl-050x060.shuffled&600&600&13.605&0.065&0.205&1.102&0.073&0.122\\
chnl-050x070.shuffled&600&600&21.077&0.082&0.166&1.621&0.084&0.232\\
		\hline
	\end{tabular}
	\caption{Individual results for channel routing instances.} \label{fig:ind:channel}
	\end{figure}
\begin{figure}
\centering
\small
	\begin{tabular}{|l|c|c|ccc|ccc|}\hline
		 & \textsc{CMS} & \textsc{CaD} & \multicolumn{3}{c|}{\textsc{BreakID}} & \multicolumn{3}{c|}{\textsc{satsuma}}\\
		 instance & solve & solve & prep & \textsc{CMS}& \textsc{CaD} & prep & \textsc{CMS}& \textsc{CaD}\\\hline  
		 clqcolor15\_5\_4&240.655&55.712&0.021&1.373&0.504&0.013&0.008&0.007\\
		 clqcolor15\_6\_5&600&375.34&0.03&6.508&2.2&0.018&0.009&0.006\\
		 clqcolor16\_5\_4&454.033&63.361&0.025&1.821&0.527&0.015&0.008&0.005\\
		 clqcolor16\_6\_5&600&479.848&0.034&6.745&2.531&0.019&0.009&0.006\\
		 clqcolor17\_5\_4&600&74.222&0.027&2.495&0.945&0.016&0.008&0.005\\
		 clqcolor17\_6\_5&600&600&0.038&17.161&5.054&0.021&0.009&0.006\\
		 clqcolor18\_5\_4&600&86.875&0.031&2.416&0.939&0.017&0.008&0.005\\
		 clqcolor18\_6\_5&600&600&0.044&26.457&7.465&0.024&0.009&0.006\\
		 clqcolor19\_5\_4&600&318.293&0.036&4.331&1.655&0.019&0.008&0.005\\
		 clqcolor20\_5\_4&600&347.361&0.041&7.539&3.143&0.021&0.009&0.006\\
		 clqcolor20\_6\_5&600&600&0.058&50.674&19.377&0.029&0.01&0.007\\
		 clqcolor23\_6\_5&600&600&0.082&228.66&83.039&0.039&0.01&0.008\\
		 clqcolor25\_5\_4&600&446.439&0.073&23.78&12.594&0.033&0.01&0.007\\
		 clqcolor25\_7\_6&600&600&0.139&600&600&0.064&0.017&0.015\\
		 clqcolor25\_8\_7&600&600&0.215&600&600&0.086&0.243&0.062\\
		 clqcolor25\_9\_8&600&600&0.25&600&600&0.109&6.12&0.36\\
		 clqcolor30\_10\_9&600&600&0.507&600&600&0.212&6.398&2.93\\
		 clqcolor30\_7\_6&600&600&0.235&600&600&0.098&0.02&0.019\\
		 clqcolor30\_8\_7&600&600&0.302&600&600&0.132&0.283&0.062\\
		 clqcolor30\_9\_8&600&600&0.4&600&600&0.168&3.713&0.466\\		 
		\hline
	\end{tabular}
\caption{Individual results for clique coloring instances.} \label{fig:ind:clqcolor}
\end{figure}

\begin{figure}
	\centering
	\scriptsize
	\begin{tabular}{|l|c|c|ccc|ccc|}\hline
		 & \textsc{CMS} & \textsc{CaD} & \multicolumn{3}{c|}{\textsc{BreakID}} & \multicolumn{3}{c|}{\textsc{satsuma}}\\
		 instance & solve & solve & prep & \textsc{CMS}& \textsc{CaD} & prep & \textsc{CMS}& \textsc{CaD}\\\hline  
		 anna.col.11&98.456&81.395&0.063&8.425&2.104&0.023&25.912&30.633\\
		 david.col.11&43.52&36.469&0.034&11.2&13.025&0.014&0.918&0.282\\
		 fpsol2.i.1.col.65&600&600&600&600&600&6.434&600&600\\
		 fpsol2.i.2.col.30&600&600&99.052&600&600&2.157&600&600\\
		 fpsol2.i.3.col.30&600&600&72.893&600&600&2.157&600&600\\
		 games120.col.9&2.768&0.376&0.026&6.116&0.401&0.017&2.179&0.311\\
		 homer.col.13&0.01&0.008&5.018&0.015&0.017&0.22&0.013&0.011\\
		 huck.col.11&19.489&49.519&0.037&0.008&0.008&0.013&0.719&1.119\\
		 inithx.i.1.col.54&600&600&600&600&600&15.321&600&600\\
		 inithx.i.2.col.31&600&600&404.173&600&600&6.097&600&600\\
		 inithx.i.3.col.31&600&600&346.767&600&600&6.004&600&600\\
		 jean.col.10&7.661&2.343&0.028&6.84&1.071&0.013&6.918&1.044\\
		 le450\_15a.col.15&600&600&0.469&600&600&0.475&600&600\\
		 le450\_15b.col.15&600&600&0.459&600&600&0.461&600&600\\
		 le450\_15c.col.15&600&600&0.947&600&600&1.363&600&600\\
		 le450\_15d.col.15&600&600&0.953&600&600&1.364&600&600\\
		 le450\_25a.col.25&600&600&1.102&600&600&0.736&600&600\\
		 le450\_25b.col.25&600&600&1.073&600&600&0.752&600&600\\
		 le450\_25c.col.25&600&600&2.334&600&600&3.205&600&600\\
		 le450\_25d.col.25&600&600&2.306&600&600&3.283&600&600\\
		 le450\_5a.col.5&0.009&0.008&0.066&0.011&0.01&0.028&0.012&0.011\\
		 le450\_5b.col.5&0.01&0.008&0.057&0.01&0.01&0.028&0.012&0.01\\
		 le450\_5c.col.5&0.011&0.011&0.097&0.012&0.013&0.044&0.014&0.014\\
		 le450\_5d.col.5&0.012&0.011&0.096&0.012&0.013&0.043&0.014&0.014\\
		 miles1000.col.42&600&600&1.129&600&600&0.412&600&600\\
		 miles1500.col.73&600&600&12.816&600&600&1.419&600&600\\
		 miles250.col.8&0.199&0.064&0.02&0.304&0.059&0.015&0.271&0.075\\
		 miles500.col.20&600&600&0.162&600&600&0.06&600&600\\
		 miles750.col.31&600&600&0.418&600&600&0.227&600&600\\
		 mulsol.i.1.col.49&600&600&30.688&600&600&0.625&600&600\\
		 mulsol.i.2.col.31&600&600&4.85&600&600&0.42&600&600\\
		 mulsol.i.3.col.31&600&600&4.851&600&600&0.415&600&600\\
		 mulsol.i.4.col.31&600&600&3.655&600&600&0.427&600&600\\
		 mulsol.i.5.col.31&600&600&5.215&600&600&0.422&600&600\\
		 myciel3.col.4&0.006&0.003&0.003&0.006&0.003&0.005&0.008&0.005\\
		 myciel4.col.5&0.039&0.015&0.004&0.009&0.007&0.005&0.009&0.005\\
		 myciel5.col.6&39.334&7.687&0.008&1.814&0.248&0.008&2.673&0.262\\
		 myciel6.col.7&600&600&0.019&600&600&0.025&600&268.653\\
		 myciel7.col.8&600&600&0.063&600&600&0.114&600&600\\
		 queen10\_10.col.10&7.188&3.588&0.052&0.01&0.054&0.136&6.338&1.796\\
		 queen11\_11.col.11&16.983&46.396&0.127&0.012&0.021&0.117&0.016&0.024\\
		 queen12\_12.col.12&53.285&600&0.112&0.015&0.027&0.168&0.023&0.035\\
		 queen13\_13.col.13&600&600&0.291&0.017&0.029&0.251&0.039&0.086\\
		 queen14\_14.col.14&600&600&0.236&0.02&0.029&0.342&0.067&0.056\\
		 queen15\_15.col.15&600&600&0.612&0.024&0.036&0.516&0.067&0.133\\
		 queen16\_16.col.16&600&600&0.473&0.047&0.142&0.692&0.119&0.157\\
		 queen5\_5.col.5&0.006&0.003&0.005&0.007&0.003&0.005&0.011&0.003\\
		 queen6\_6.col.7&3.649&0.407&0.009&0.007&0.004&0.009&0.008&0.004\\
		 queen7\_7.col.7&0.015&0.009&0.014&0.008&0.005&0.013&0.007&0.005\\
		 queen8\_12.col.12&60.928&434.188&0.059&0.012&0.021&0.075&0.02&0.036\\
		 queen8\_8.col.9&600&600&0.024&17.623&5.46&0.029&44.979&12.945\\
		 queen9\_9.col.10&600&600&0.055&600&600&0.103&600&600\\
		 zeroin.i.1.col.49&600&600&91.624&600&600&0.719&600&600\\
		 zeroin.i.2.col.30&600&600&10.754&600&600&0.421&28.394&80.356\\
		 zeroin.i.3.col.30&600&600&10.23&600&600&0.442&600&600\\				   
		\hline
	\end{tabular}
	\caption{Individual results for graph coloring instances.} \label{fig:ind:coloring}
	\end{figure}

\begin{figure}
\centering
\small
\begin{tabular}{|l|c|c|ccc|ccc|}\hline
	 & \textsc{CMS} & \textsc{CaD} & \multicolumn{3}{c|}{\textsc{BreakID}} & \multicolumn{3}{c|}{\textsc{satsuma}}\\
	 instance & solve & solve & prep & \textsc{CMS}& \textsc{CaD} & prep & \textsc{CMS}& \textsc{CaD}\\\hline  
	 fpga10\_11\_uns\_rcr&46.254&64.618&0.019&0.009&0.007&0.008&0.008&0.016\\
	 fpga10\_12\_uns\_rcr&54.591&112.249&0.022&0.009&0.008&0.008&0.008&0.055\\
	 fpga10\_13\_uns\_rcr&131.252&264.064&0.025&0.01&0.007&0.009&0.008&0.006\\
	 fpga10\_15\_uns\_rcr&123.957&335.03&0.033&0.01&0.008&0.01&0.008&0.016\\
	 fpga10\_20\_uns\_rcr&143.731&136.936&0.061&0.01&0.008&0.014&0.009&0.062\\
	 fpga11\_12\_uns\_rcr&316.177&600&0.025&0.009&0.007&0.009&0.008&0.006\\
	 fpga11\_13\_uns\_rcr&600&600&0.029&0.01&0.008&0.009&0.008&0.006\\
	 fpga11\_14\_uns\_rcr&600&600&0.033&0.01&0.007&0.01&0.008&0.061\\
	 fpga11\_15\_uns\_rcr&600&600&0.037&0.011&0.01&0.011&0.009&0.018\\
	 fpga11\_20\_uns\_rcr&600&600&0.07&0.011&0.009&0.016&0.009&0.007\\			   
	\hline
\end{tabular}
\caption{Individual results for \textsf{fpga} instances.} \label{fig:ind:fpga}
\end{figure}

\begin{figure}
\centering
\small
\begin{tabular}{|l|c|c|ccc|ccc|}\hline
	 & \textsc{CMS} & \textsc{CaD} & \multicolumn{3}{c|}{\textsc{BreakID}} & \multicolumn{3}{c|}{\textsc{satsuma}}\\
	 instance & solve & solve & prep & \textsc{CMS}& \textsc{CaD} & prep & \textsc{CMS}& \textsc{CaD}\\\hline  
	 gus-md5-04&1.486&1.806&0.624&1.54&2.966&0.507&1.199&1.124\\
	 gus-md5-05&5.103&7.185&0.631&5.119&9.398&0.522&2.815&4.041\\
	 gus-md5-06&15.274&30.497&0.628&18.953&27.612&0.57&8.914&14.64\\
	 gus-md5-07&49.704&37.75&0.628&56.388&34.524&0.538&29.281&27.701\\
	 gus-md5-09&273.206&149.676&0.639&271.201&183.121&0.561&228.512&166.89\\
	 gus-md5-10&600&506.239&0.636&600&518.947&0.543&570.16&357.478\\
	 gus-md5-11&600&600&0.638&600&600&0.576&600&600\\
	 gus-md5-12&600&600&0.643&600&600&0.568&600&600\\
	 gus-md5-14&600&600&0.639&600&600&0.568&600&600\\
	 gus-md5-15&600&600&0.638&600&600&0.549&600&600\\
	 gus-md5-16&600&600&0.645&600&600&0.531&600&600\\					
	\hline
\end{tabular}
\caption{Individual results for \textsf{md5} instances.} \label{fig:ind:md5}
\end{figure}

\begin{figure}
\centering
\small
\begin{tabular}{|l|c|c|ccc|ccc|}\hline
	 & \textsc{CMS} & \textsc{CaD} & \multicolumn{3}{c|}{\textsc{BreakID}} & \multicolumn{3}{c|}{\textsc{satsuma}}\\
	 instance & solve & solve & prep & \textsc{CMS}& \textsc{CaD} & prep & \textsc{CMS}& \textsc{CaD}\\\hline  
	 hole005&0.008&0.004&0.003&0.007&0.003&0.004&0.007&0.004\\
	 hole007&0.273&0.037&0.005&0.007&0.003&0.004&0.007&0.003\\
	 hole010&32.383&29.724&0.007&0.007&0.004&0.004&0.007&0.004\\
	 hole012&600&600&0.01&0.008&0.005&0.005&0.008&0.004\\
	 hole015&600&600&0.019&0.008&0.005&0.005&0.008&0.005\\
	 hole020&600&600&0.048&0.01&0.008&0.008&0.008&0.007\\
	 hole030&600&600&0.217&0.015&0.017&0.02&0.013&0.024\\
	 hole050&600&600&1.657&0.073&0.102&0.083&0.032&0.113\\
	 hole075&600&600&11.027&0.12&0.256&0.315&0.098&0.307\\
	 hole100&600&600&50.373&0.17&0.443&0.832&0.168&0.529\\					 
	\hline
\end{tabular}
\caption{Individual results for pigeonhole principle instances.} \label{fig:ind:php}
\end{figure}

\begin{figure}
\centering
\small
\begin{tabular}{|l|c|c|ccc|ccc|}\hline
	 & \textsc{CMS} & \textsc{CaD} & \multicolumn{3}{c|}{\textsc{BreakID}} & \multicolumn{3}{c|}{\textsc{satsuma}}\\
	 instance & solve & solve & prep & \textsc{CMS}& \textsc{CaD} & prep & \textsc{CMS}& \textsc{CaD}\\\hline  
	 ram3\_3\_6&0.008&0.003&0.003&0.007&0.003&0.004&0.007&0.005\\
	 ram3\_4\_9&0.283&0.075&0.004&0.008&0.004&0.004&0.007&0.005\\
	 ram3\_5\_14&600&600&0.023&1.59&0.346&0.01&0.01&0.01\\
	 ram3\_6\_18&600&600&0.302&600&600&0.09&0.169&0.145\\
	 ram3\_7\_23&600&600&10.253&600&600&2.349&446.697&145.927\\
	 ram4\_4\_18&600&600&0.054&600&600&0.075&600&600\\
	 ram4\_5\_25&600&600&1.131&600&600&0.228&600&600\\					  
	\hline
\end{tabular}
\caption{Individual results for Ramsey instances.} \label{fig:ind:ramsey}
\end{figure}

\begin{figure}
	\centering
	\small
	\begin{tabular}{|l|c|c|ccc|ccc|}\hline
		 & \textsc{CMS} & \textsc{CaD} & \multicolumn{3}{c|}{\textsc{BreakID}} & \multicolumn{3}{c|}{\textsc{satsuma}}\\
		 instance & solve & solve & prep & \textsc{CMS}& \textsc{CaD} & prep & \textsc{CMS}& \textsc{CaD}\\\hline  
		 Urq3\_5&1.145&6.083&0.008&0.007&0.004&0.007&0.007&0.004\\
		 Urq4\_5&0.883&291.65&0.014&0.007&0.004&0.008&0.147&0.087\\
		 Urq5\_5&0.715&600&0.039&0.008&0.005&0.017&0.085&0.108\\
		 Urq6\_5&0.624&600&0.098&0.008&0.005&0.027&0.016&0.01\\
		 Urq7\_5&0.626&600&0.196&0.008&0.005&0.041&0.018&0.01\\
		 Urq8\_5&0.613&600&0.487&0.009&0.006&0.089&0.121&0.094\\							   
		\hline
	\end{tabular}
	\caption{Individual results for \textsf{urquhart} instances.} \label{fig:ind:urquhart}
	\end{figure}

\end{document}